\newtheorem{theorem}{Theorem}
\newtheorem{observation}[theorem]{Observation}
\newcommand{\tr}{{\mathrm{tr}}}
\newcommand{\eins}{\mathbbm{1}}
\renewcommand{\vr}{\ensuremath{\varrho}}
\renewcommand{\vec}[1]{\ensuremath{\boldsymbol{#1}}}
\begin{document}
\title{Semi-classical geometric tensor in multiparameter quantum information}

\author{Satoya Imai\hyperlink{email1}{\textsuperscript{*}}}
\affiliation{Institute of Systems and Information Engineering, University of Tsukuba, Tsukuba, Ibaraki 305-8573, Japan}
\affiliation{Center for Artificial Intelligence Research (C-AIR), University of Tsukuba, Tsukuba, Ibaraki 305-8577, Japan}
\affiliation{Istituto Nazionale di Ottica del Consiglio Nazionale delle Ricerche (CNR-INO), Largo Enrico Fermi 6, 50125, Firenze, Italy}
\affiliation{European Laboratory for Nonlinear Spectroscopy (LENS), Via N. Carrara 1, 50019 Sesto Fiorentino, Italy}

\author{Jing Yang\hyperlink{email2}{\textsuperscript{\textdagger}}}
\affiliation{Institute for Fundamental and Transdisciplinary Research,  Institute for Quantum Sensing, and Institute for Advanced Study in Physics, Zhejiang University, Hangzhou 310027, China}
\affiliation{Nordita, KTH Royal Institute of Technology and Stockholm University, Hannes Alfvéns vag 12, SE-106 91 Stockholm, Sweden}

\author{Luca Pezzè\hyperlink{email3}{\textsuperscript{\textdaggerdbl}}}
\affiliation{Istituto Nazionale di Ottica del Consiglio Nazionale delle Ricerche (CNR-INO), Largo Enrico Fermi 6, 50125, Firenze, Italy}
\affiliation{European Laboratory for Nonlinear Spectroscopy (LENS), Via N. Carrara 1, 50019 Sesto Fiorentino, Italy}

\date{\today}

\begin{abstract}
{The discrepancy between quantum distinguishability in Hilbert space and classical distinguishability in probability space is expressed by the gap between the quantum and classical Fisher information matrices (QFIM and CFIM, respectively). This intrinsic quantum obstruction is generally not saturable and plays a central role in both fundamental insights and practical applications in modern quantum physics. Here, we develop a geometrical framework for this gap by introducing the notion of semi-classical geometric tensor (SCGT). We relate this quantity to the quantum geometric tensor (QGT), whose real part equals the QFIM. We prove the matrix inequality between QGT and SCGT, which sharpens the standard inequality between QFIM and CFIM and provides novel multiparameter information bounds: the real part of the SCGT reproduces the CFIM plus an additional nonnegative contribution capturing quantum obstruction. This further motivates a natural extension of the Berry phase to the semi-classical setting.}
\end{abstract}

\maketitle

{\textit{Introduction.---}
Quantifying information in a quantum system requires distinguishing what is operationally accessible through measurements from what is intrinsic to the quantum state itself. This distinction clarifies fundamental limits on observables and has become central in quantum information~\cite{wilde2013quantum, hayashi2017quantum}, geometry~\cite{wootters1981statistical, braunstein1994statistical,bengtsson2017geometry}, and technology~\cite{giovannetti2011advances, pezze2018quantum}. The Fisher information provides a natural framework: the classical variant captures what a given measurement can reveal, while its quantum counterpart specifies the state’s ultimate limit, thereby making the measurement–state gap explicit.

The classical Fisher information matrix (CFIM) quantifies how much information a probability distribution carries about $m$ parameters $\vec{\theta} = (\theta_1, \ldots, \theta_m)$~\cite{helstrom1976quantum}. In quantum theory, probabilities arise from the Born rule: $p_\omega (\vec{\theta}) = \tr(\vr_{\vec{\theta}} E_\omega)$, where $\vr_{\vec{\theta}}$ is a $\vec{\theta}$-dependent quantum state and $\mathsf{E} = \{E_\omega\}$ is a positive operator-valued measure (POVM), satisfying $E_\omega \geq 0$ and $\sum_\omega E_\omega = \eins$. This foundational postulate yields the cornerstone inequality:
\begin{equation}
\label{eq:braunstein_caves}
  \mathcal{F}_C (\vr_{\vec{\theta}}, \mathsf{E})
  \leq 
  \mathcal{F}_Q (\vr_{\vec{\theta}}),
\end{equation}
where $\mathcal{F}_C$ and $\mathcal{F}_Q$ denote the CFIM and the quantum Fisher information matrix (QFIM), respectively~\cite{liu2020quantum}. This inequality holds for all states and POVMs, showing that the classical information extractable via measurements cannot exceed a fundamental bound determined solely by the quantum state.

In the single-parameter case ($m = 1$), equality in Eq.~(\ref{eq:braunstein_caves}) can always be saturated by an appropriate measurement~\cite{braunstein1994statistical}. In contrast, for multiple parameters ($m \geq 2$), saturation is generally impossible~\cite{amari2000methods,matsumoto2002new,pezze2017optimal,yang2019optimal}. The resulting gap between $\mathcal{F}_C$ and $\mathcal{F}_Q$ reflects \textit{measurement incompatibility}, which is rooted in the noncommutativity of the optimal measurements for different parameters~\cite{ holevo2011probabilistic, ragy2016compatibility, carollo2019quantumness,belliardo2021incompatibility}. This represents a genuinely quantum obstruction rather than a purely operational shortcoming.

So far, measurement incompatibility in the framework of Fisher information has primarily been explored from a quantum metrology perspective. Estimation-theoretic analyses typically compare the quantum Cramér-Rao bound~\cite{helstrom1976quantum}---a sensitivity limit set by $\mathcal{F}_Q^{-1}$---with the Holevo bound~\cite{holevo2011probabilistic}---a limit attainable with collective measurements on infinitely many copies of $\vr_{\vec{\theta}}$~\cite{demkowicz2020multi,albarelli2020perspective,pezze2025advances}. This viewpoint raises additional challenges, including the identification of optimal POVMs that saturate the Holevo bound and the characterization of performance gaps when only a finite number of measurements or copies is available (see, e.g., Refs.~\cite{conlon2021efficient, conlon2022gap, sidhu2021tight, hayashi2023tight}).

Distinguishability Riemannian metrics~\cite{bengtsson2017geometry}---Fisher-Rao (for probability distributions), Fubini-Study (pure states), and Bures (mixed states)---identify $\mathcal{F}_C$ and $\mathcal{F}_Q$ as statistical speeds under infinitesimal parameter shifts. However, these purely real metrics neglect the parameter-induced {\it phase twists} that accumulate as geometric (Berry) phases along closed loops~\cite{berry1984quantal,berry1989quantum}. In particular, the saturation in Eq.~(\ref{eq:braunstein_caves}) requires the absence of such twists (see below). A complete account of measurement incompatibility in the multiparameter regime thus demands a framework beyond real metric structures.

In this manuscript, we address the fundamental discrepancy between CFIM and QFIM from a quantum geometric perspective. We recognize that the \textit{quantum geometric tensor} (QGT)~\cite{provost1980riemannian,shapere1989geometric, carollo2020geometry}, $\mathcal{Q}(\vr_{\vec{\theta}})$, possesses nontrivial real and imaginary parts associated with the metric and the (Berry) curvature, respectively. We generalize the QGT by introducing the notion of \textit{semi-classical geometric tensor} (SCGT), $\mathcal{C}(\vr_{\vec{\theta}}, \mathsf{E})$, which is gauge invariant and measurement dependent. We establish the inequality
\begin{equation}
\label{eq:ourCQinequality}
  \mathcal{C}(\vr_{\vec{\theta}}, \mathsf{E})
  \leq
  \mathcal{Q}(\vr_{\vec{\theta}}),
\end{equation}
in close analogy with Eq.~(\ref{eq:braunstein_caves}). Taking the real parts of Eq.~(\ref{eq:ourCQinequality}) recovers the QFIM on the right-hand side, while the left-hand side provides a refined geometric account of the CFIM-QFIM gap, offering novel insights into measurement incompatibility. In the following, after formal definitions of the above quantities and the derivation of Eq.~(\ref{eq:ourCQinequality}), we discuss three implications: (i) identification of optimal measurements for which the CFIM-QFIM gap vanishes; (ii) derivation of tight bounds on measurement incompatibility; and (iii) elucidation of the geometric phase associated with the SCGT.

\vspace{1em}
\textit{Geometric tensors.---}
For a pure state $\ket{\psi_{\vec{\theta}}}$ parameterized by $\vec{\theta}$, the QGT is defined as the Hermitian, positive-semidefinite matrix $\mathcal{Q}(\ket{\psi_{\vec{\theta}}})$ with elements
\begin{equation} \label{eq:QGT}
    [\mathcal{Q}(\ket{\psi_{\vec{\theta}}})]_{ij}
    = 4 \braket{\partial_i \psi_{\vec{\theta}}|
    ( \eins - \ket{\psi_{\vec{\theta}}}\! \bra{\psi_{\vec{\theta}}} )
    |\partial_j \psi_{\vec{\theta}}},
\end{equation}
where $\ket{\partial_i \psi_{\vec{\theta}}} \equiv \partial_i \ket{\psi_{\vec{\theta}}}$ and $\partial_i \equiv {\partial}/{\partial \theta_i}$. A notable property of the QGT is its invariance under local gauge transformations: $\mathcal{Q}(\ket{\psi_{\vec{\theta}}^\prime}) = \mathcal{Q}(\ket{\psi_{\vec{\theta}}})$ for $\ket{\psi_{\vec{\theta}}^\prime} = e^{i \alpha_{\vec{\theta}}} \ket{\psi_{\vec{\theta}}}$, where the real quantity $\alpha_{\vec{\theta}}$ depends on $\vec{\theta}$. The real part of Eq.~(\ref{eq:QGT}) recovers the Fubini-Study metric~\cite{kobayashi1996foundations,bengtsson2017geometry}, while the imaginary part is the Berry curvature~\cite{berry1984quantal}. The QGT has probed quantum systems via fidelity susceptibility~\cite{zanardi2007information,campos2007quantum,gu2010fidelity,kolodrubetz2017geometry}, informed studies of quantum materials~\cite{yu2024quantum,jiang2025revealing}, been experimentally measured~\cite{tan2019experimental,gianfrate2020measurement,yu2020experimental,kang2024measurements}, and set the asymptotic conversion rate in the resource theory of asymmetry~\cite{yamaguchi2024quantum}.

For a general state $\vr_{\vec{\theta}}$, a generalization of the QGT is given by~\cite{carollo2020geometry}:
\begin{equation} \label{eq:mixedQGT}
    [\mathcal{Q}(\vr_{\vec{\theta}})]_{ij} = \tr (\vr_{\vec{\theta}} L_i L_j ).
\end{equation}
The Hermitian operator $L_i \equiv  L_i (\vr_{\vec{\theta}})$ is called the symmetric logarithmic derivative (SLD)~\cite{helstrom1976quantum}, and is defined by the relation $\partial_i \vr_{\vec{\theta}} = ( L_i \vr_{\vec{\theta}} + \vr_{\vec{\theta}} L_i )/2$. For pure states, Eq.~(\ref{eq:mixedQGT}) reduces to Eq.~(\ref{eq:QGT}), since $L_i(\ket{\psi_{\vec{\theta}}}) = 2 ( \ket{\partial_i \psi_{\vec{\theta}}}\! \bra{\psi_{\vec{\theta}}} + \ket{\psi_{\vec{\theta}}}\! \bra{\partial_i \psi_{\vec{\theta}}})$. The real part, ${\rm Re}[\mathcal{Q} (\vr_{\vec{\theta}})] \equiv \mathcal{F}_Q (\vr_{\vec{\theta}})$, is the QFIM with elements
\begin{equation} \label{eq:QFIM}
    [\mathcal{F}_Q(\vr_{\vec{\theta}})]_{ij}
    = \frac{1}{2} \tr[\vr_{\vec{\theta}} (L_i L_j + L_j L_i) ],
\end{equation}
while the imaginary part, ${\rm Im}[\mathcal{Q} (\vr_{\vec{\theta}})] \equiv \mathcal{G} (\vr_{\vec{\theta}})$~\cite{notation_Re_Im}, is known as mean Uhlmann curvature~\cite{uhlmann1986parallel,carollo2020geometry,carollo2018uhlmann}. Note that for single-parameter ($m = 1$) unitary encoding, the scalar quantum Fisher information has been related to entanglement properties in quantum systems~\cite{pezze2009entanglement,hyllus2012fisher,toth2012multipartite,pezze2016witnessing,imai2025metrological}, see also Refs.~\cite{gessner2018sensitivity,du2025quantifying} for investigations in the $m \geq 2$ scenario. 

Interestingly, Eq.~(\ref{eq:braunstein_caves}) is saturated under optimal measurement conditions, \textit{only if} the imaginary part of Eq.~(\ref{eq:mixedQGT}) vanishes~\cite{pezze2017optimal,yang2019optimal}: if $\mathcal{F}_C(\vr_{\vec{\theta}}, \mathsf{E}) = \mathcal{F}_Q(\vr_{\vec{\theta}})$, then $\mathcal{G}(\vr_{\vec{\theta}})=0$, where $\mathcal{F}_C (\vr_{\vec{\theta}}, \mathsf{E})$ is the CFIM with elements
\begin{equation} \label{eq.FIM}
[\mathcal{F}_C(\vr_{\vec{\theta}}, \mathsf{E})]_{i,j}
   = \sum_\omega \frac{
   [\partial_i p_\omega (\vec{\theta})]
   [\partial_j p_\omega (\vec{\theta})]
   }{p_\omega (\vec{\theta})}.
\end{equation}
This condition suggests that the CFIM can coincide with the QFIM only if the underlying Riemannian structure of the quantum state space has zero mean Uhlmann curvature. In other words, the parameter information encoded in the symplectic structure (described by the imaginary part of the QGT) cannot be accessed through the probabilities of measurement outcomes. Then it would be challenging to establish a geometric tensor that encompasses nontrivial real and imaginary parts for general POVM operators and fully recovers the QGT within appropriate limiting scenarios.

To address this puzzling discrepancy, we introduce the Hermitian, positive-semidefinite matrix with elements 
\begin{equation} \label{eq:def_matrixC}
    [\mathcal{C}(\vr_{\vec{\theta}}, \mathsf{E})]_{ij}
    = \sum_\omega
    \frac{[\chi_{\omega,i}(\vec{\theta})]^* \chi_{\omega,j}(\vec{\theta})}{p_\omega (\vec{\theta})},
\end{equation}
where $\chi_{\omega,i}(\vec{\theta}) \equiv \tr (\vr_{\vec{\theta}} E_\omega L_i)$ and $[\chi_{\omega,i}(\vec{\theta})]^* = \tr (\vr_{\vec{\theta}} L_i E_\omega)$ is the conjugation of $\chi_{\omega,i}(\vec{\theta})$. Due to the relation ${\rm Re}[\chi_{\omega,i}(\vec{\theta})] = [\partial_i p_\omega (\vec{\theta})]$, Eq.~(\ref{eq:def_matrixC}) provides a generalization of Eq.~(\ref{eq.FIM}) that includes a nontrivial imaginary part. In the main text, we focus on regular POVMs such that $p_{\omega} (\vec{\theta}) > 0$ for the sake of simplicity, while null POVMs such that $p_{\omega} (\vec{\theta}) = 0$ are discussed in Appendix~A.

For pure states, Eq.~(\ref{eq:def_matrixC}) has a structure analogue to Eq.~(\ref{eq:QGT}) (shown in Appendix~B):
\begin{equation}\label{eq:classical_GT}
    [\mathcal{C}(\ket{\psi_{\vec{\theta}}}, \mathsf{E})]_{ij}
    \!=\! 4 \braket{\partial_i \psi_{\vec{\theta}}|
    [\mathcal{M}(\ket{\psi_{\vec{\theta}}},\mathsf{E}) - \ket{\psi_{\vec{\theta}}}\! \bra{\psi_{\vec{\theta}}}]|\partial_j \psi_{\vec{\theta}}}, 
\end{equation}
where $\mathcal{M}(\ket{\psi_{\vec{\theta}}}, \mathsf{E}) \equiv \sum_\omega [1/p_\omega (\vec{\theta})] E_\omega \ket{\psi_{\vec{\theta}}}\! \bra{\psi_{\vec{\theta}}} E_\omega$. Also, Eq.~(\ref{eq:classical_GT}) shares with the QGT the property of gauge invariance: $\mathcal{C}(\ket{\psi_{\vec{\theta}}^\prime},\mathsf{E}) = \mathcal{C}(\ket{\psi_{\vec{\theta}}},\mathsf{E})$ for $\ket{\psi_{\vec{\theta}}^\prime} = e^{i \alpha_{\vec{\theta}}} \ket{\psi_{\vec{\theta}}}$ (shown in Appendix~C). The structural analogy with Eq.~(\ref{eq:QGT}) and the gauge invariance suggest calling Eq.~(\ref{eq:def_matrixC}) a {\it semi-classical geometric tensor} (SCGT), namely a counterpart of the QGT depending on the specific POVM. Finally, in Appendix~I, we show how the SCGT can be accessed for pure states by measuring suitable Hermitian operators.}

\vspace{1em}
\textit{Lower bound to the QGT.---}
Let us present and prove one of the main results of this manuscript:
\begin{observation} \label{ob:ourinequality}
    For a general state $\vr_{\vec{\theta}}$, consider Hermitian and positive-semidefinite matrices $\mathcal{Q}(\vr_{\vec{\theta}})$ in Eq.~(\ref{eq:mixedQGT}) and $\mathcal{C}(\vr_{\vec{\theta}},\mathsf{E})$ in Eq.~(\ref{eq:def_matrixC}). The inequality (\ref{eq:ourCQinequality}) holds for all states and POVMs, with the meaning $\vec{z}^\dagger \mathcal{C}(\vr_{\vec{\theta}},\mathsf{E}) \vec{z} \leq \vec{z}^\dagger \mathcal{Q}(\vr_{\vec{\theta}}) \vec{z}$ for any complex vector $\vec{z}$.
\end{observation}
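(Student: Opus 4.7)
The plan is to reduce the matrix inequality to a scalar inequality by contracting with an arbitrary complex vector $\vec{z}$, and then to invoke the Cauchy--Schwarz inequality for the Hilbert--Schmidt inner product. Concretely, I would fix $\vec{z} \in \mathbb{C}^m$ and introduce the (generally non-Hermitian) operator $A \equiv \sum_i z_i L_i$, so that $A^\dagger = \sum_i z_i^* L_i$ since each SLD $L_i$ is Hermitian. Then $\vec{z}^\dagger \mathcal{Q}(\vr_{\vec{\theta}}) \vec{z} = \sum_{ij} z_i^* z_j \tr(\vr_{\vec{\theta}} L_i L_j) = \tr(\vr_{\vec{\theta}} A^\dagger A)$, and using $\sum_j z_j \chi_{\omega,j}(\vec{\theta}) = \tr(\vr_{\vec{\theta}} E_\omega A)$ together with its conjugate, one gets $\vec{z}^\dagger \mathcal{C}(\vr_{\vec{\theta}}, \mathsf{E}) \vec{z} = \sum_\omega |\tr(\vr_{\vec{\theta}} E_\omega A)|^2 / p_\omega(\vec{\theta})$. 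So the task becomes showing
\begin{equation*}
\sum_\omega \frac{|\tr(\vr_{\vec{\theta}} E_\omega A)|^2}{p_\omega(\vec{\theta})} \;\leq\; \tr(\vr_{\vec{\theta}} A^\dagger A).
\end{equation*}

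Next, since $E_\omega \geq 0$ and $\vr_{\vec{\theta}} \geq 0$, I can take operator square roots and rewrite $\tr(\vr_{\vec{\theta}} E_\omega A) = \tr\!\bigl[(\sqrt{E_\omega}\sqrt{\vr_{\vec{\theta}}})^\dagger (\sqrt{E_\omega} A \sqrt{\vr_{\vec{\theta}}})\bigr]$ by cyclicity. Applying Cauchy--Schwarz with $X = \sqrt{E_\omega}\sqrt{\vr_{\vec{\theta}}}$ and $Y = \sqrt{E_\omega} A \sqrt{\vr_{\vec{\theta}}}$ gives $|\tr(\vr_{\vec{\theta}} E_\omega A)|^2 \leq \tr(X^\dagger X)\,\tr(Y^\dagger Y) = p_\omega(\vec{\theta})\,\tr(\vr_{\vec{\theta}} A^\dagger E_\omega A)$. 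Dividing by $p_\omega(\vec{\theta})$ (regularity assumed in the main text; the null-POVM case is handled in Appendix~A) and summing over $\omega$ yields $\sum_\omega |\tr(\vr_{\vec{\theta}} E_\omega A)|^2/p_\omega(\vec{\theta}) \leq \tr\bigl[\vr_{\vec{\theta}} A^\dagger \bigl(\sum_\omega E_\omega\bigr) A\bigr] = \tr(\vr_{\vec{\theta}} A^\dagger A)$, where the completeness relation $\sum_\omega E_\omega = \eins$ does the final collapse.

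Combining the two identifications then delivers $\vec{z}^\dagger \mathcal{C}(\vr_{\vec{\theta}}, \mathsf{E}) \vec{z} \leq \vec{z}^\dagger \mathcal{Q}(\vr_{\vec{\theta}}) \vec{z}$ for all complex $\vec{z}$, which is exactly the matrix inequality of Eq.~(\ref{eq:ourCQinequality}). I would close with the saturation remark: equality for a given $\vec{z}$ corresponds to $\sqrt{E_\omega}A\sqrt{\vr_{\vec{\theta}}}$ being proportional to $\sqrt{E_\omega}\sqrt{\vr_{\vec{\theta}}}$ for each $\omega$, recovering the standard optimality criterion and connecting to the discussion of measurement incompatibility that follows.

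The main technical obstacle I anticipate is bookkeeping rather than conceptual. The operator $A$ is not Hermitian when $\vec{z}$ is complex, so one must be careful not to symmetrize prematurely (the real part of $\mathcal{Q}$ alone would only control $\mathrm{Re}\,\vec{z}^\dagger\mathcal{Q}\vec{z}$, which would give the weaker Braunstein--Caves bound on $\mathcal{F}_C$). Retaining the asymmetric split $X = \sqrt{E_\omega}\sqrt{\vr_{\vec{\theta}}}$, $Y = \sqrt{E_\omega}A\sqrt{\vr_{\vec{\theta}}}$ is what keeps both the real and imaginary parts of the tensors on an equal footing, and is therefore the key step to verify. Edge cases (rank-deficient $\vr_{\vec{\theta}}$, non-uniqueness of $L_i$ on the kernel, vanishing $p_\omega(\vec{\theta})$) are handled by the appendix treatment and standard limiting arguments.
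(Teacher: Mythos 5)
Your proof is correct and follows essentially the same route as the paper: contract with $\vec{z}$, set $\Tilde{L}=\sum_i z_i L_i$, and apply the Hilbert--Schmidt Cauchy--Schwarz inequality with $X=\sqrt{E_\omega}\sqrt{\vr_{\vec{\theta}}}$ and $Y=\sqrt{E_\omega}\Tilde{L}\sqrt{\vr_{\vec{\theta}}}$, then sum using $\sum_\omega E_\omega=\eins$. Your closing remarks on saturation and on not symmetrizing prematurely match the paper's subsequent discussion (Appendices~D and~E) but are not needed for the observation itself.
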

\begin{proof}
    For any $\vec{z} \in \mathbb{C}^{m}$, we write
    \begin{equation} \label{eq:proofourCQinequality}
        \vec{z}^\dagger \mathcal{C}(\vr_{\vec{\theta}},\mathsf{E}) \vec{z}
        =\sum_{\omega} \frac{1}{p_\omega (\vec{\theta})}
        \lvert \tr (\vr_{\vec{\theta}} E_\omega \Tilde{L} ) \rvert^2,
    \end{equation}
    where $\Tilde{L}=\sum_i z_i L_i$. Letting $X = \sqrt{E_\omega} \sqrt{\vr_{\vec{\theta}}}$ and $Y = \sqrt{E_\omega} \Tilde{L} \sqrt{\vr_{\vec{\theta}}}$ and applying the Cauchy-Schwarz inequality $|\tr(X^\dagger Y)|^2 \leq \tr(XX^\dagger) \tr(YY^\dagger)$ yields that $\lvert \tr (\vr_{\vec{\theta}} E_\omega \Tilde{L} ) \rvert^2 \leq p_\omega (\vec{\theta}) \tr ( E_\omega \Tilde{L} \vr_{\vec{\theta}} \Tilde{L}^\dagger)$. Inserting this into Eq.~(\ref{eq:proofourCQinequality}) and using $\sum_\omega E_\omega = \eins$, we obtain Eq.~(\ref{eq:ourCQinequality}). Since $\lvert \tr (\vr_{\vec{\theta}} E_\omega \Tilde{L} ) \rvert^2  \geq  0$, we directly obtain $\mathcal{C}(\vr_{\vec{\theta}},\mathsf{E}) \geq 0$.
\end{proof}

Let us discuss the saturation of the inequality~(\ref{eq:ourCQinequality}). For pure states, $\mathcal{C}(\ket{\psi_{\vec{\theta}}}, \mathsf{E}) = \mathcal{Q}(\ket{\psi_{\vec{\theta}}})$ holds for \textit{every} rank-one POVM $\{E_\omega = \ket{\pi_\omega}\! \bra{\pi_\omega}\}$, where $E_\omega$ is not necessarily projective (namely, $E_\omega E_{\omega^\prime} = \delta_{\omega, \omega^\prime} E_\omega$ does not necessarily hold). This can be seen by noticing that $E_\omega \ket{\psi_{\vec{\theta}}}\! \bra{\psi_{\vec{\theta}}} E_\omega = p_\omega (\vec{\theta}) E_\omega$ for $E_{\omega} = \ket{\pi_{\omega}}\! \bra{\pi_\omega}$ and thus $\mathcal{M}(\ket{\psi_{\vec{\theta}}}, \mathsf{E})$ in Eq.~(\ref{eq:classical_GT}) becomes the identity matrix for any $\ket{\psi_{\vec{\theta}}}$. The consequence of this saturation will be elaborated in the next section.

For general mixed states and regular POVM operators, $\mathcal{C}(\vr_{\vec{\theta}}, \mathsf{E}) = \mathcal{Q}(\vr_{\vec{\theta}})$ holds if and only if there exists a rank-one POVM $\{E_\omega = \ket{\pi_\omega}\! \bra{\pi_\omega}\}$ such that
\begin{equation}\label{eq:R_case}
    \bra{\pi_\omega} \otimes \bra{\pi_\omega} (L_i \otimes \eins - \eins \otimes L_i) \ket{\psi_{x, \vec{\theta}}} \otimes \ket{\psi_{y, \vec{\theta}}}
        = 0,
\end{equation}
holds for all $i,\omega,x,y$, where $\ket{\psi_{x, \vec{\theta}}}$ is the eigenstate of $\vr_{\vec{\theta}}$. It is straightforward to see that Eq.~(\ref{eq:R_case}) is verified for pure states since $\ket{\psi_{x, \vec{\theta}}} = \ket{\psi_{y, \vec{\theta}}} = \ket{\psi_{\vec{\theta}}}$. The proof of Eq.~(\ref{eq:R_case}) is shown in Appendices~D~and~E. The saturation condition in null POVMs is discussed in Appendix~F.

\vspace{1em}
\textit{Tighter lower bound to the QFIM.---}
Let us decompose the SCGT into real and imaginary parts: ${\rm Re} [\mathcal{C}(\vr_{\vec{\theta}}, \mathsf{E})] \equiv \mathcal{F}_C(\vr_{\vec{\theta}}, \mathsf{E}) + \mathcal{I}(\vr_{\vec{\theta}}, \mathsf{E})$ and ${\rm Im} [\mathcal{C}(\vr_{\vec{\theta}}, \mathsf{E})] \equiv \mathcal{D}(\vr_{\vec{\theta}}, \mathsf{E})$, where $\mathcal{I}(\vr_{\vec{\theta}}, \mathsf{E})$ and $\mathcal{D}(\vr_{\vec{\theta}}, \mathsf{E})$ have elements
\begin{subequations}
    \begin{align} 
        \label{eq:classical_I}
        [\mathcal{I}(\vr_{\vec{\theta}}, \mathsf{E})]_{ij}
        &= \sum_{\omega} 
        \frac{{\rm Im}[\chi_{\omega,i}(\vec{\theta})] {\rm Im}[\chi_{\omega,j}(\vec{\theta})]} {p_\omega (\vec{\theta})},
        \\ \label{eq:classical_D}
        [\mathcal{D}(\vr_{\vec{\theta}}, \mathsf{E})]_{ij}
        &=\sum_{\omega}
        \frac{\xi_{\omega, ij}(\vec{\theta}) - \xi_{\omega, ji}(\vec{\theta})} {p_\omega (\vec{\theta})},
        \\
        \xi_{\omega, ij}(\vec{\theta})
        &\equiv {\rm Re}[\chi_{\omega,i}(\vec{\theta})] {\rm Im}[\chi_{\omega,j}(\vec{\theta})]. \label{eq.:xi}
    \end{align}
\end{subequations}
In general, $\mathcal{I}(\vr_{\vec{\theta}}, \mathsf{E})$ and $\mathcal{D}(\vr_{\vec{\theta}}, \mathsf{E})$ are nonzero matrices, while $\mathcal{D}(\vr_{\vec{\theta}}, \mathsf{E}) = 0$ holds in the single-parameter case. For more expressions of pure states and unitary transformations, see Appendix~G.

We can present our second main result:
\begin{observation}\label{ob:generalization_Braunstein-Caves}
    We have that $\mathcal{F}_C(\vr_{\vec{\theta}},\mathsf{E}) + \mathcal{I}(\vr_{\vec{\theta}},\mathsf{E})$ provides a tighter lower bound to $\mathcal{F}_Q(\vr_{\vec{\theta}})$ than $\mathcal{F}_C(\vr_{\vec{\theta}},\mathsf{E})$:
    \begin{equation} \label{eq:lowerbound_FC_FQ}
        \mathcal{F}_C(\vr_{\vec{\theta}},\mathsf{E})
        \leq \mathcal{F}_C(\vr_{\vec{\theta}},\mathsf{E})
        + \mathcal{I}(\vr_{\vec{\theta}},\mathsf{E})
        \leq \mathcal{F}_Q(\vr_{\vec{\theta}}).
    \end{equation}
    \end{observation}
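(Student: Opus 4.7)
The plan is to obtain Observation~\ref{ob:generalization_Braunstein-Caves} as a direct corollary of Observation~\ref{ob:ourinequality}, by splitting the chain of inequalities in Eq.~(\ref{eq:lowerbound_FC_FQ}) into two independent steps.

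First, I would handle the left inequality $\mathcal{F}_C \leq \mathcal{F}_C + \mathcal{I}$ by showing that $\mathcal{I}(\vr_{\vec{\theta}},\mathsf{E})$ is positive-semidefinite. From Eq.~(\ref{eq:classical_I}), setting $v_{\omega,i} \equiv \mathrm{Im}[\chi_{\omega,i}(\vec{\theta})] \in \mathbb{R}$, one has $[\mathcal{I}]_{ij} = \sum_\omega v_{\omega,i} v_{\omega,j}/p_\omega (\vec{\theta})$, so that $\mathcal{I} = \sum_\omega p_\omega^{-1}\, \vec{v}_\omega \vec{v}_\omega^T$ is a sum of rank-one positive-semidefinite matrices with nonnegative weights. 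This yields $\mathcal{I} \geq 0$ and hence the first inequality.

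For the right inequality, I would start from the definition~(\ref{eq:def_matrixC}) and explicitly compute the real part of $\mathcal{C}$: expanding $\chi^*_{\omega,i}\chi_{\omega,j} = \mathrm{Re}[\chi_{\omega,i}]\mathrm{Re}[\chi_{\omega,j}] + \mathrm{Im}[\chi_{\omega,i}]\mathrm{Im}[\chi_{\omega,j}] + i(\cdots)$ and using $\mathrm{Re}[\chi_{\omega,i}(\vec{\theta})]=\partial_i p_\omega(\vec{\theta})$, one sees that $\mathrm{Re}[\mathcal{C}(\vr_{\vec{\theta}},\mathsf{E})] = \mathcal{F}_C(\vr_{\vec{\theta}},\mathsf{E}) + \mathcal{I}(\vr_{\vec{\theta}},\mathsf{E})$ by comparison with Eqs.~(\ref{eq.FIM}) and (\ref{eq:classical_I}). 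Recalling $\mathrm{Re}[\mathcal{Q}(\vr_{\vec{\theta}})] = \mathcal{F}_Q(\vr_{\vec{\theta}})$, the target inequality reduces to $\mathrm{Re}[\mathcal{C}] \leq \mathrm{Re}[\mathcal{Q}]$.

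The last step is to show that the Hermitian inequality~(\ref{eq:ourCQinequality}) from Observation~\ref{ob:ourinequality} implies the corresponding real-symmetric inequality for the real parts. This is the only conceptual point requiring care: I would simply restrict the defining condition $\vec{z}^\dagger \mathcal{C} \vec{z} \leq \vec{z}^\dagger \mathcal{Q} \vec{z}$ to real test vectors $\vec{z} = \vec{r} \in \mathbb{R}^m$. Since the antisymmetric imaginary parts drop out of the real quadratic form when $\vec{z}$ is real, one obtains $\vec{r}^T \mathrm{Re}[\mathcal{C}]\, \vec{r} \leq \vec{r}^T \mathrm{Re}[\mathcal{Q}]\, \vec{r}$ for all $\vec{r} \in \mathbb{R}^m$, which is precisely the matrix inequality $\mathcal{F}_C + \mathcal{I} \leq \mathcal{F}_Q$. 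Combining both steps gives Eq.~(\ref{eq:lowerbound_FC_FQ}).

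I do not expect any significant obstacle: the nontrivial content has already been absorbed in Observation~\ref{ob:ourinequality}, and the remaining work is the positivity of $\mathcal{I}$ and the algebraic identification of $\mathrm{Re}[\mathcal{C}]$. The subtlest part, if any, is the justification that a Hermitian matrix inequality passes to its real symmetric part, but this is immediate by testing on real vectors as above.
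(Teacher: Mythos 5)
Your proof is correct and follows essentially the same route as the paper: the right-hand inequality comes from Observation~\ref{ob:ourinequality} by passing to real parts of the positive-semidefinite Hermitian matrix $\mathcal{Q}-\mathcal{C}$, and the left-hand inequality comes from the positive semidefiniteness of $\mathcal{I}$ as a sum of rank-one terms. The only (cosmetic) difference is in how you justify that the real part of a PSD Hermitian matrix is PSD --- you restrict the quadratic form to real test vectors, while the paper notes that $X\geq 0$ implies $X^\top\geq 0$ and hence $(X+X^\top)/2\geq 0$; both arguments are valid and equivalent.
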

\begin{proof}
    Recall that the transpose of a positive-semidefinite matrix $X \! \geq \! 0$ is also positive-semidefinite $X^\top \! \geq  \! 0$. Therefore, ${\rm Re}[X] \!  \geq  \! 0$. Taking $X = \mathcal{Q}(\vr_{\vec{\theta}}) - \mathcal{C}(\vr_{\vec{\theta}},\mathsf{E}) \geq 0$ in Eq.~(\ref{eq:ourCQinequality}), we obtain the right-hand inequality of Eq.~(\ref{eq:lowerbound_FC_FQ}). The left-hand inequality of Eq.~(\ref{eq:lowerbound_FC_FQ}) follows from that $\mathcal{F}_C(\vr_{\vec{\theta}},\mathsf{E})\geq 0$ by definition and that $\mathcal{I}(\vr_{\vec{\theta}},\mathsf{E}) \geq 0$ since $\vec{z}^\dagger \mathcal{I}(\vr_{\vec{\theta}},\mathsf{E}) \vec{z} = \sum_\omega  {\rm Im}[\tr (\vr_{\vec{\theta}} E_\omega \Tilde{L} )]^2/p_\omega (\vec{\theta}) \geq 0$ for any vector $\vec{z} \in \mathbb{C}^{m}$, with $\Tilde{L}=\sum_i z_i L_i$.
\end{proof}

We have several remarks on Observation~\ref{ob:generalization_Braunstein-Caves}. First, Eq.~(\ref{eq:lowerbound_FC_FQ}) is the generalization of Eq.~(\ref{eq:braunstein_caves}), originally derived by Braunstein and Caves in the single-parameter case~\cite{braunstein1994statistical} and later extended to multiparameter cases~\cite{pezze2017optimal,yang2019optimal}. The additional term $\mathcal{I}(\vr_{\vec{\theta}},\mathsf{E})$ quantifies a nontrivial gap between the CFIM and the QFIM.

For pure states, the gap is tight: $\mathcal{F}_C(\ket{\psi_{\vec{\theta}}},\mathsf{E}) + \mathcal{I}(\ket{\psi_{\vec{\theta}}},\mathsf{E}) = \mathcal{F}_Q(\ket{\psi_{\vec{\theta}}})$ holds for \textit{any} rank-one POVM, since $\mathcal{C}(\ket{\psi_{\vec{\theta}}},\mathsf{E}) =\mathcal{Q} (\ket{\psi_{\vec{\theta}}})$ as discussed above. Then, the quantity $\mathcal{I}(\ket{\psi_{\vec{\theta}}},\mathsf{E})$ precisely quantifies the difference between the QFIM and the CFIM. The necessary and sufficient condition for $\mathcal{I}(\ket{\psi_{\vec{\theta}}},\mathsf{E})=0$ is ${\rm Im}[\chi_{\omega,i}(\vec{\theta})] = 0$ for all $i$ and $\omega$. This recovers the necessary and sufficient condition for the existence of a rank-one regular POVM to achieve $\mathcal{F}_C(\ket{\psi_{\vec{\theta}}},\mathsf{E}) = \mathcal{F}_Q(\ket{\psi_{\vec{\theta}}})$, as introduced in Ref.~\cite{pezze2017optimal}, see Appendix~H for more details.

For general mixed states, the necessary and sufficient condition for $\mathcal{F}_C(\vr_{\vec{\theta}},\mathsf{E}) + \mathcal{I}(\vr_{\vec{\theta}},\mathsf{E}) = \mathcal{F}_Q(\vr_{\vec{\theta}})$ is given in Eq.~(\ref{eq:R_case}). We have that $\mathcal{I}(\vr_{\vec{\theta}},\mathsf{E})=0$ if and only if ${\rm Im}[\chi_{\omega,i}(\vec{\theta})] = 0$ for all $i$ and $\omega$. This recovers the necessary and sufficient condition for the existence of a rank-one regular POVM to achieve $\mathcal{F}_C(\vr_{\vec{\theta}},\mathsf{E}) = \mathcal{F}_Q(\vr_{\vec{\theta}})$, discussed in Refs.~\cite{yang2019optimal,yang2026}, see Appendix~H for more details.

In the single-parameter case ($m=1$), Eq.~(\ref{eq:R_case}) becomes $\braket{\pi_\omega|L|\psi_{x,\theta}} \! \braket{\pi_\omega|\psi_{y,\theta}} = \braket{\pi_\omega|\psi_{x,\theta}} \! \braket{\pi_\omega|L|\psi_{y,\theta}}$. This condition is satisfied for all $x,y$ by choosing $\ket{\pi_\omega}$ as an eigenstate of the SLD operator $L$. Such a choice also ensures that ${\rm Im}[\chi_{\omega,i}(\theta)] = 0$, given that the eigenvalues of $L$ are real. We thus recover that $\mathcal{F}_C (\vr_{{\theta}},\mathsf{E}) \leq \mathcal{F}_Q (\vr_{{\theta}})$ can always be saturated~\cite{braunstein1994statistical}, where both quantities are scalars. 

Finally, for the trace of the QFIM, we have the additional lower bound:
\begin{equation}
    \label{eq:trace_version_FC_FG}
            \lVert \Delta(\vr_{\vec{\theta}}, \mathsf{E}) \rVert_{\rm tr}
            \!+\! \tr [\mathcal{F}_C(\vr_{\vec{\theta}},\mathsf{E})
            \!+\! \mathcal{I}(\vr_{\vec{\theta}},\mathsf{E})]
            \! \leq \! 
            \tr[\mathcal{F}_Q(\vr_{\vec{\theta}})],
\end{equation}
where $\Delta(\vr_{\vec{\theta}},\mathsf{E}) \equiv \mathcal{G}(\vr_{\vec{\theta}}) - \mathcal{D}(\vr_{\vec{\theta}},\mathsf{E})$ and $\lVert X \rVert_{\rm tr} \equiv \sum_{i} \lvert x_i \rvert$ denotes the trace norm for $x_i$ being the eigenvalues of a matrix $X$. The derivation of Eq.~(\ref{eq:trace_version_FC_FG}) is based on the Belavkin-Grishanin inequality~\cite{belavkin1973optimum} (see Lemma~1 in Ref.~\cite{tsang2020quantum}): for a positive-semidefinite matrix $X \geq 0$, it holds that $\tr[ {\rm Re}(X) ] \geq \lVert {\rm Im}(X) \rVert_{\rm tr}$. Taking $X = \mathcal{Q}(\vr_{\vec{\theta}}) - \mathcal{C}(\vr_{\vec{\theta}},\mathsf{E}) \geq 0$ directly yields Eq.~(\ref{eq:trace_version_FC_FG}). We note that Eq.~(\ref{eq:trace_version_FC_FG}) provides a tighter lower bound than the one obtained by taking the trace of $\mathcal{F}_C(\vr_{\vec{\theta}},\mathsf{E}) + \mathcal{I}(\vr_{\vec{\theta}},\mathsf{E}) \leq \mathcal{F}_Q(\vr_{\vec{\theta}})$. The difference between the imaginary parts $\mathcal{G}(\vr_{\vec{\theta}})$ and $\mathcal{D}(\vr_{\vec{\theta}},\mathsf{E})$, which cannot appear in the matrix inequality~(\ref{eq:lowerbound_FC_FQ}), emerges as an additional term in the scalar inequality~(\ref{eq:trace_version_FC_FG}).

\vspace{1em}
\textit{Closeness between CFIM and QFIM.---}
Besides the matrix inequality~(\ref{eq:lowerbound_FC_FQ}), we provide a scalar bound to characterize further how close the CFIM is to the QFIM for given POVM operators.
\begin{observation}\label{ob:scalar_generalization_Braunstein-Caves}
    For a given Hermitian and positive-definite matrix $W$ with $\tr(W) = 1$ (without loss of generality), it holds that
    \begin{equation} \label{eq:scalar_lowerbound_FC_FQ}
    \tr \big(
    W \mathcal{F}_Q^{-\frac{1}{2}} \mathcal{F}_C \mathcal{F}_Q^{-\frac{1}{2}}
    \big) \leq 1 - \Gamma_{W},
    \end{equation}
    where 
    \begin{equation} \label{eq:upperbound_gamma}
        \Gamma_{W} \!=\!
        \tr \big(
        W \mathcal{F}_Q^{-\frac{1}{2}} \mathcal{I} \mathcal{F}_Q^{-\frac{1}{2}}
        \big)
        +\lVert \sqrt{W} \mathcal{F}_Q^{-\frac{1}{2}} \Delta \mathcal{F}_Q^{-\frac{1}{2}} \sqrt{W} \rVert_{\rm tr}
    \end{equation}
    is a non-negative quantity.
\end{observation}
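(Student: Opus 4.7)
The plan is to extend the trace inequality of Eq.~(\ref{eq:trace_version_FC_FG}) to a $W$-weighted setting, by combining the matrix inequality of Observation~\ref{ob:ourinequality} with a congruence transformation and then invoking the Belavkin--Grishanin inequality. The key construction is to set $T \equiv \mathcal{F}_Q^{-1/2}\sqrt{W}$ and form
\begin{equation*}
K \equiv T^\top \big[\mathcal{Q}(\vr_{\vec{\theta}}) - \mathcal{C}(\vr_{\vec{\theta}},\mathsf{E}) \big] T.
\end{equation*}
Since $W$ and $\mathcal{F}_Q$ are real symmetric and positive-definite, $T$ is a real matrix, and congruence preserves positive semidefiniteness, so $K \geq 0$ by Observation~\ref{ob:ourinequality}.

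Next, I would decompose $K$ into real and imaginary parts. Writing $\mathcal{Q} - \mathcal{C} = (\mathcal{F}_Q - \mathcal{F}_C - \mathcal{I}) + i\Delta$ with real symmetric real part and real antisymmetric imaginary part, and exploiting that $T$ is real so that ${\rm Re}$ and ${\rm Im}$ commute past the sandwich, I obtain
\begin{align*}
{\rm Re}(K) &= \sqrt{W}\, \mathcal{F}_Q^{-1/2} (\mathcal{F}_Q - \mathcal{F}_C - \mathcal{I}) \mathcal{F}_Q^{-1/2} \sqrt{W}, \\
{\rm Im}(K) &= \sqrt{W}\, \mathcal{F}_Q^{-1/2}\, \Delta\, \mathcal{F}_Q^{-1/2} \sqrt{W}.
\end{align*}
Applying the Belavkin--Grishanin inequality $\tr[{\rm Re}(K)] \geq \lVert {\rm Im}(K) \rVert_{\rm tr}$ (the same ingredient that yields Eq.~(\ref{eq:trace_version_FC_FG})), and using cyclicity of the trace together with $\sqrt{W}^2 = W$ and the normalization $\tr(W) = 1$, the left-hand side collapses to $1 - \tr(W \mathcal{F}_Q^{-1/2} \mathcal{F}_C \mathcal{F}_Q^{-1/2}) - \tr(W \mathcal{F}_Q^{-1/2} \mathcal{I} \mathcal{F}_Q^{-1/2})$. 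Rearranging delivers Eq.~(\ref{eq:scalar_lowerbound_FC_FQ}) with $\Gamma_W$ exactly as in Eq.~(\ref{eq:upperbound_gamma}).

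The non-negativity of $\Gamma_W$ then follows immediately. The first term equals the trace of $\sqrt{W}\mathcal{F}_Q^{-1/2} \mathcal{I} \mathcal{F}_Q^{-1/2}\sqrt{W}$, which is positive-semidefinite because $\mathcal{I} \geq 0$ was established in Observation~\ref{ob:generalization_Braunstein-Caves}; the second term is a trace norm, non-negative by definition. There is no substantial obstacle in this argument; the only point requiring a bit of care is the asymmetric appearance of $W$ versus $\sqrt{W}$ in the two terms of $\Gamma_W$, which reflects that cyclicity allows $\sqrt{W}\cdots\sqrt{W}\to W\cdots$ inside a trace but not inside a trace norm. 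Once the sandwich by $T = \mathcal{F}_Q^{-1/2}\sqrt{W}$ is identified, the argument is a direct, $W$-weighted reprise of the derivation of Eq.~(\ref{eq:trace_version_FC_FG}): positivity preserved under congruence, clean splitting into real/antisymmetric parts because $T$ is real, and a single application of the Belavkin--Grishanin inequality.
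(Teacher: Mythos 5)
Your proposal is correct and follows essentially the same route as the paper: the paper likewise takes $X=\sqrt{W}\,\mathcal{F}_Q^{-1/2}(\mathcal{Q}-\mathcal{C})\,\mathcal{F}_Q^{-1/2}\sqrt{W}\geq 0$ (your $K$ with $T=\mathcal{F}_Q^{-1/2}\sqrt{W}$), applies the Belavkin--Grishanin inequality, and identifies $\tr[{\rm Re}(X)]=1-\tr[W\mathcal{F}_Q^{-1/2}(\mathcal{F}_C+\mathcal{I})\mathcal{F}_Q^{-1/2}]$. Your extra remarks on the real/antisymmetric splitting and the trace-norm term are just an expanded version of the same argument.
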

\begin{proof}
    We take $X = \sqrt{W} \mathcal{F}_Q^{-\frac{1}{2}} (\mathcal{Q} - \mathcal{C} ) \mathcal{F}_Q^{-\frac{1}{2}} \sqrt{W} \geq 0$ due to Eq.~(\ref{eq:ourCQinequality}) and $\mathcal{F}_Q \geq 0$. We obtain Eq.~(\ref{eq:scalar_lowerbound_FC_FQ}) by following the Belavkin-Grishanin inequality (as discussed above) and noting that $\tr[ {\rm Re}(X) ] = 1 - \tr[W  \mathcal{F}_Q^{-\frac{1}{2}} (\mathcal{F}_C + \mathcal{I})  \mathcal{F}_Q^{-\frac{1}{2}}]$.
\end{proof}

We notice that Eq.~(\ref{eq:scalar_lowerbound_FC_FQ}) yields a tighter bound than $\tr (W \mathcal{F}_Q^{-\frac{1}{2}} \mathcal{F}_C \mathcal{F}_Q^{-\frac{1}{2}} ) \leq 1$, which can be obtained from $\mathcal{F}_C \leq \mathcal{F}_Q$. The upper bound of Eq.~(\ref{eq:scalar_lowerbound_FC_FQ}) is computable, since $\Gamma_{W}$ depends on the specific POVM. In the case of $W = \eins/m$ for $m$ being the number of parameters, Eq.~(\ref{eq:scalar_lowerbound_FC_FQ}) reduces to
\begin{equation} \label{eq:close_qCFIMs}
    \tr \big(
    \mathcal{F}_Q^{-1} \mathcal{F}_C \big)
    \leq m - \Gamma_{\eins/m},
\end{equation}
where $\Gamma_{\eins/m} = \tr (\mathcal{F}_Q^{-1} \mathcal{I} ) + \lVert \mathcal{F}_Q^{-\frac{1}{2}} \Delta \mathcal{F}_Q^{-\frac{1}{2}} \rVert_{\rm tr} \in [0, m]$. Eq.~(\ref{eq:close_qCFIMs}) is related with the inequality $\tr (\mathcal{F}_Q^{-1} \mathcal{F}_C ) \leq d - 1$, derived by Gill and Massar~\cite{gill2000state} {(also see \cite{wang2025tight})}, where $d$ is the dimension of the Hilbert space of $\vr_{\vec{\theta}}$. Our upper bound in Eq.~(\ref{eq:close_qCFIMs}) is tighter than the Gill-Massar bound in the generally relevant case of large $d$ (e.g., $d=2^N$ for $N$ qubits) and relatively small $m$.

Finally, we remark that the quantity $\mathcal{R} \equiv \lVert i \mathcal{F}_Q^{-1} \mathcal{G} \rVert_\infty \in [0,1]$ has been considered to characterize measurement incompatibility in multiparameter quantum estimation~\cite{carollo2019quantumness} (see also Refs.~\cite{razavian2020quantumness,belliardo2021incompatibility}), where $\lVert X \rVert_\infty$ is the largest absolute eigenvalue of $X$ (different notion of measurement incompatibility as the absence of joint measurability has also been discussed in quantum information, see~\cite{heinosaari2016invitation,guhne2023colloquium}). In multiparameter quantum metrology, the quantity $\mathcal{R}$ provides an upper bound of the ratio between the Holevo bound~\cite{holevo2011probabilistic} and the Helstrom Cramér-Rao bound~\cite{helstrom1976quantum}, see Refs.~\cite{albarelli2020perspective,demkowicz2020multi,pezze2025advances} for more details. Based on Eq.~(\ref{eq:ourCQinequality}), we can present
\begin{equation} \label{eq.Rineq}
    \lVert \mathcal{F}_Q^{-1} \mathcal{C} - \eins \rVert_\infty
    \leq \mathcal{R} \leq
    \lVert \eins - \mathcal{F}_Q^{-\frac{1}{2}} \mathcal{C} \mathcal{F}_Q^{-\frac{1}{2}} \rVert_\infty,
\end{equation}
where $\lVert \eins - \mathcal{F}_Q^{-\frac{1}{2}} \mathcal{C} \mathcal{F}_Q^{-\frac{1}{2}} \rVert_\infty \leq 1$. If $\mathcal{C} = \mathcal{Q}$, then both inequalities become equalities. 

The left-hand inequality in Eq.~(\ref{eq.Rineq}) is derived by using $\mathcal{C} \leq \mathcal{Q}$ and $\mathcal{Q} = \mathcal{F}_Q + i \mathcal{G}$. To obtain the right-hand inequality, we use $X^{-\frac{1}{2}} Y X^{-\frac{1}{2}} \geq 0$, valid for positive-semidefinite matrices $X, Y$: Taking $X=\mathcal{F}_Q$ and $Y = \mathcal{Q} - \mathcal{C}$, we obtain $- i \mathcal{F}_Q^{-\frac{1}{2}} \mathcal{G} \mathcal{F}_Q^{-\frac{1}{2}} \leq \eins - \mathcal{F}_Q^{-\frac{1}{2}} \mathcal{C} \mathcal{F}_Q^{-\frac{1}{2}} \leq \eins$, where $\lVert - i \mathcal{F}_Q^{-\frac{1}{2}} \mathcal{G} \mathcal{F}_Q^{-\frac{1}{2}} \rVert_\infty = \lVert i \mathcal{F}_Q^{-1} \mathcal{G} \rVert_\infty$.

\vspace{1em}
\textit{Imaginary part of the SCGT.---}
For pure states, the imaginary part of the QGT is reformulated as
\begin{equation} \label{eq:mathG_pure_another}
    \mathcal{G}(\ket{\psi_{\vec{\theta}}}) = - 2 \Omega (\ket{\psi_{\vec{\theta}}}),
\end{equation}
where $\Omega (\ket{\psi_{\vec{\theta}}})$ is the Berry curvature, with elements $\Omega_{ij} = \partial_i \mathcal{A}_j - \partial_j \mathcal{A}_i$, and $\mathcal{A} \equiv \mathcal{A}(\ket{\psi_{\vec{\theta}}})$ is the Berry connection with $\mathcal{A}_j \equiv i \braket{\psi_{\vec{\theta}}|\partial_j \psi_{\vec{\theta}}}$~\cite{berry1984quantal,simon1983holonomy,aharonov1987phase}. The form of Eq.~(\ref{eq:mathG_pure_another}) can be checked by $[\mathcal{G}(\ket{\psi_{\vec{\theta}}})]_{ij} = 4 {\rm Im}[\braket{\partial_i \psi_{\vec{\theta}}|\partial_j \psi_{\vec{\theta}}}]$. The Berry curvature describes an effective gauge field strength in the parameter space, analogous to a fictitious magnetic field experienced during adiabatic evolution~\cite{sakurai2020modern}.

Let us write $\mathcal{A} = \sum_{\omega} \mathcal{A}_\omega$ with $[\mathcal{A}_\omega]_j \equiv i \braket{\psi_{\vec{\theta}}|E_\omega|\partial_j \psi_{\vec{\theta}}}$, where $\mathcal{A}$ is a real vector but $\mathcal{A}_\omega$ is not a real vector due to $[\mathcal{A}_\omega]_j^* = [\mathcal{A}_\omega]_j - i \partial_j p_\omega (\vec{\theta})$. Since the imaginary part of the SCGT is $[\mathcal{D}(\ket{\psi_{\vec{\theta}}},\mathsf{E})]_{ij} = 4 {\rm Im}[\braket{\partial_i \psi_{\vec{\theta}}|\mathcal{M}(\ket{\psi_{\vec{\theta}}},\mathsf{E})|\partial_j \psi_{\vec{\theta}}}]$, a direct calculation leads to
\begin{equation} \label{eq:mathD_pure_another}
    \mathcal{D}(\ket{\psi_{\vec{\theta}}},\mathsf{E})
    = -2i \sum_{\omega} \frac{
    \mathcal{A}_\omega^* \mathcal{A}_\omega^\top
    - \mathcal{A}_\omega \mathcal{A}_\omega^\dagger}{p_\omega (\vec{\theta})},
\end{equation}
where $*$, $\top$, and $\dagger$ respectively denote the complex conjugation, the transposition, and the Hermitian (conjugate transpose). Using Eqs.~(\ref{eq:mathG_pure_another}, \ref{eq:mathD_pure_another}) and letting $\Omega = \sum_\omega \Omega_\omega$ with elements $[\Omega_\omega]_{ij} \equiv \partial_i [\mathcal{A}_\omega]_j - \partial_j [\mathcal{A}_\omega]_i$, we can express the gap as $\Delta \equiv \mathcal{G}-\mathcal{D}  = \sum_\omega \Delta_\omega$, where $\Delta_\omega$ vanishes for a rank-one POVM.

The integral of $\Omega (\ket{\psi_{\vec{\theta}}})$ over an oriented manifold $S$ in the parameter space is known as the Berry phase~\cite{berry1984quantal}:
\begin{equation} \label{eq:berryphase}
    \phi_Q \equiv \frac{1}{2} \int_S
    \sum_{i,j} \, [\Omega (\ket{\psi_{\vec{\theta}}})]_{ij}
    \,
    d\theta_i \wedge d\theta_j,
\end{equation}
where $\wedge$ is the wedge (or exterior) product and $d\theta_i \wedge d\theta_j$ is an area element on $S$~\cite{nakahara2018geometry}. In particular, in the two-dimensional parameter space ($m=2$), the Gauss–Bonnet theorem states that $\nu_Q = \phi_Q/(2 \pi)$ is always an integer, known as the first Chern number, which serves as a topological invariant~\cite{xiao2010berry,qi2011topological,bernevig2013topological}.

In analogy to Eq.~(\ref{eq:berryphase}), we can introduce
\begin{equation} \label{eq:SCGT_phase}
    \phi_C
    \equiv
    - \frac{1}{4} \int_S
    \sum_{i,j} \, [\mathcal{D}(\ket{\psi_{\vec{\theta}}}, \mathsf{E})]_{ij}
    \,
    d\theta_i \wedge d\theta_j.
\end{equation}
We have that $\phi_C=\phi_Q$ for any rank-one POVM, but $\phi_C \neq \phi_Q$ for general POVMs. Thus, $\nu_C \equiv \phi_C/(2\pi)$ cannot always be an integer, because the Gauss–Bonnet theorem cannot be applied {due to the form of Eq.~(\ref{eq:mathD_pure_another}).}

\vspace{1em}
\textit{Example.---}
Consider a single-qubit state with $\vec{\theta} = (\vartheta, \varphi)$ for the intervals $\vartheta \in [0, \pi]$ and $\varphi \in [0, 2 \pi]$: $\ket{\psi_{\vec{\theta}}} = \sin(\vartheta/2) \ket{0} + e^{i \varphi} \cos(\vartheta/2) \ket{1}$, where $\ket{0}$ and $\ket{1}$ are the eigenstates of the Pauli-$z$ matrix with $\pm 1$ eigenvalues, respectively. Taking the non-rank-one POVM with two outcomes $\mathsf{E} = \{E_\omega = \varepsilon \ket{\omega}\! \bra{\omega} + (1-\varepsilon) \eins/ 2\}$ for $\omega = 0,1$ and a parameter $\varepsilon \in [0,1]$, we have $\chi_{1, \vartheta} = - \chi_{2, \vartheta} = \varepsilon \sin(\vartheta)/2$ and $\chi_{1, \varphi} = - \chi_{2, \varphi} = - i \varepsilon \sin^2(\vartheta)/2$. Thus, 
\begin{equation}
    \mathcal{Q} =
    \begin{bmatrix}
    1 & -i \sin(\vartheta)
    \\
    i \sin(\vartheta) & \sin^2(\vartheta)
    \end{bmatrix},
    \quad
    \mathcal{C}
    = f_{\varepsilon,\vartheta} \mathcal{Q}
\end{equation}
where $f_{\varepsilon,\vartheta} = \varepsilon^2 \sin^2(\vartheta)/[1-\varepsilon^2 \cos^2(\vartheta)]$. We obtain $\nu_Q = 1$ and $\nu_C (\varepsilon) = 1 - [(1/\varepsilon) - \varepsilon] {\rm arctanh} (\varepsilon) \in [0,1]$, where $\nu_C (\varepsilon)$ monotonically increases for $\varepsilon$. This suggests that $\nu_C$ may provide a lower bound to $\nu_Q$ in general.

\vspace{1em}
\textit{Conclusion.---}
{We have examined the gap between QFIM and CFIM from a geometric perspective by introducing the SCGT and establishing an inequality between QGT and SCGT that sharpens the known discrepancy between QFIM and CFIM. Based on this approach, we derived multiparameter quantum information bounds.}

Our results open several avenues for further research. First, our findings may advance toward the characterization of measurement incompatibility and the saturation problem of the quantum Cramér-Rao bound in multiparameter metrology~\cite{albarelli2020perspective,pezze2025advances}, recognized as a relevant open problem in quantum information~\cite{horodecki2022five}. Second, exploring the role of the SCGT or its real part could provide fresh insights into quantum information science, such as the theory of asymmetry~\cite{yamaguchi2024quantum} and operational frameworks based on the quantum Fisher information in thermodynamics~\cite{marvian2022operational} and quantum resource theories~\cite{tan2021fisher}. Moreover, our results may be extended beyond SLD operators, and be related to generalized quantum speeds~\cite{petz2002covariance,petz2011introduction}, susceptibilities~\cite{hauke2016measuring}, Gaussian states~\cite{jiang2014quantum,nichols2018multiparameter,oh2019optimal}, and higher-order geometric quantities~\cite{hetenyi2023fluctuations}. Finally, beyond the theoretical interests of our findings, {the practical accessibility of the SCGT is discussed in Appendix~I.}

\vspace{1em}
{\it Acknowledgments.---}
We thank
Francesco Albarelli,
Leonardo Banchi,
Hongzhen Chen,
Balázs Hetény,
Yutaka Shikano,
Augusto Smerzi,
Hiroyasu Tajima,
Lingna Wang,
Frank Wilczek,
Benjamin Yadin,
and Haidong Yuan,
for discussions.
S.I. acknowledges support from Horizon Europe programme HORIZON-CL4-2022-QUANTUM-02-SGA via the project 101113690 (PASQuanS2.1) and JST ASPIRE (JPMJAP2339).
J.Y. acknowledges support from Zhejiang University start-up grants, Zhejiang Key Laboratory of R\&D and Application of Cutting-edge Scientific Instruments, and Wallenberg Initiative on Networks and Quantum Information (WINQ).
L.P. acknowledges support from the QuantERA project SQUEIS (Squeezing enhanced inertial sensing), funded by the European Union’s Horizon Europe Program and the Agence Nationale de la Recherche (ANR-22-QUA2-0006). 



\begin{thebibliography}{76}
\providecommand{\natexlab}[1]{#1}
\providecommand{\url}[1]{\texttt{#1}}
\expandafter\ifx\csname urlstyle\endcsname\relax
  \providecommand{\doi}[1]{doi: #1}\else
  \providecommand{\doi}{doi: \begingroup \urlstyle{rm}\Url}\fi

\bibitem[Wilde(2013)]{wilde2013quantum}
Mark~M Wilde.
\newblock \emph{Quantum information theory}.
\newblock Cambridge university press, 2013.

\bibitem[Hayashi(2017)]{hayashi2017quantum}
Masahito Hayashi.
\newblock \emph{Quantum information theory}.
\newblock Springer, 2017.

\bibitem[Wootters(1981)]{wootters1981statistical}
William~K Wootters.
\newblock Statistical distance and hilbert space.
\newblock \emph{Physical Review D}, 23\penalty0 (2):\penalty0 357, 1981.

\bibitem[Braunstein and Caves(1994)]{braunstein1994statistical}
Samuel~L Braunstein and Carlton~M Caves.
\newblock Statistical distance and the geometry of quantum states.
\newblock \emph{Physical Review Letters}, 72\penalty0 (22):\penalty0 3439, 1994.

\bibitem[Bengtsson and {\.Z}yczkowski(2017)]{bengtsson2017geometry}
Ingemar Bengtsson and Karol {\.Z}yczkowski.
\newblock \emph{Geometry of Quantum States: An Introduction to Quantum Entanglement}.
\newblock Cambridge University Press, 2017.

\bibitem[Giovannetti et~al.(2011)Giovannetti, Lloyd, and Maccone]{giovannetti2011advances}
Vittorio Giovannetti, Seth Lloyd, and Lorenzo Maccone.
\newblock Advances in quantum metrology.
\newblock \emph{Nature photonics}, 5\penalty0 (4):\penalty0 222--229, 2011.

\bibitem[Pezze et~al.(2018)Pezze, Smerzi, Oberthaler, Schmied, and Treutlein]{pezze2018quantum}
Luca Pezzè, Augusto Smerzi, Markus~K Oberthaler, Roman Schmied, and Philipp Treutlein.
\newblock Quantum metrology with nonclassical states of atomic ensembles.
\newblock \emph{Reviews of Modern Physics}, 90\penalty0 (3):\penalty0 035005, 2018.

\bibitem[Helstrom(1976)]{helstrom1976quantum}
Carl~W Helstrom.
\newblock \emph{Quantum detection and estimation theory}, volume 123.
\newblock Academic Press, 1976.

\bibitem[Liu et~al.(2020)Liu, Yuan, Lu, and Wang]{liu2020quantum}
Jing Liu, Haidong Yuan, Xiao-Ming Lu, and Xiaoguang Wang.
\newblock Quantum fisher information matrix and multiparameter estimation.
\newblock \emph{Journal of Physics A: Mathematical and Theoretical}, 53\penalty0 (2):\penalty0 023001, 2020.

\bibitem[Amari and Nagaoka(2000)]{amari2000methods}
Shun-ichi Amari and Hiroshi Nagaoka.
\newblock \emph{Methods of information geometry}, volume 191.
\newblock American Mathematical Soc., 2000.

\bibitem[Matsumoto(2002)]{matsumoto2002new}
Keiji Matsumoto.
\newblock A new approach to the cram{\'e}r-rao-type bound of the pure-state model.
\newblock \emph{Journal of Physics A: Mathematical and General}, 35\penalty0 (13):\penalty0 3111, 2002.

\bibitem[Pezz{\`e} et~al.(2017)Pezz{\`e}, Ciampini, Spagnolo, Humphreys, Datta, Walmsley, Barbieri, Sciarrino, and Smerzi]{pezze2017optimal}
Luca Pezz{\`e}, Mario~A Ciampini, Nicol{\`o} Spagnolo, Peter~C Humphreys, Animesh Datta, Ian~A Walmsley, Marco Barbieri, Fabio Sciarrino, and Augusto Smerzi.
\newblock Optimal measurements for simultaneous quantum estimation of multiple phases.
\newblock \emph{Physical Review Letters}, 119\penalty0 (13):\penalty0 130504, 2017.

\bibitem[Yang et~al.(2019)Yang, Pang, Zhou, and Jordan]{yang2019optimal}
Jing Yang, Shengshi Pang, Yiyu Zhou, and Andrew~N Jordan.
\newblock Optimal measurements for quantum multiparameter estimation with general states.
\newblock \emph{Physical Review A}, 100\penalty0 (3):\penalty0 032104, 2019.

\bibitem[Holevo(2011)]{holevo2011probabilistic}
Alexander~S Holevo.
\newblock \emph{Probabilistic and Statistical Aspects of Quantum Theory}, volume~1.
\newblock Springer Science \& Business Media, 2011.

\bibitem[Ragy et~al.(2016)Ragy, Jarzyna, and Demkowicz-Dobrza{\'n}ski]{ragy2016compatibility}
Sammy Ragy, Marcin Jarzyna, and Rafa{\l} Demkowicz-Dobrza{\'n}ski.
\newblock Compatibility in multiparameter quantum metrology.
\newblock \emph{Physical Review A}, 94\penalty0 (5):\penalty0 052108, 2016.

\bibitem[Carollo et~al.(2019)Carollo, Spagnolo, Dubkov, and Valenti]{carollo2019quantumness}
Angelo Carollo, Bernardo Spagnolo, Alexander~A Dubkov, and Davide Valenti.
\newblock On quantumness in multi-parameter quantum estimation.
\newblock \emph{Journal of Statistical Mechanics: Theory and Experiment}, 2019\penalty0 (9):\penalty0 094010, 2019.

\bibitem[Belliardo and Giovannetti(2021)]{belliardo2021incompatibility}
Federico Belliardo and Vittorio Giovannetti.
\newblock Incompatibility in quantum parameter estimation.
\newblock \emph{New Journal of Physics}, 23\penalty0 (6):\penalty0 063055, 2021.

\bibitem[Demkowicz-Dobrza{\'n}ski et~al.(2020)Demkowicz-Dobrza{\'n}ski, G{\'o}recki, and Gu{\c{t}}{\u{a}}]{demkowicz2020multi}
Rafa{\l} Demkowicz-Dobrza{\'n}ski, Wojciech G{\'o}recki, and M{\u{a}}d{\u{a}}lin Gu{\c{t}}{\u{a}}.
\newblock Multi-parameter estimation beyond quantum fisher information.
\newblock \emph{Journal of Physics A: Mathematical and Theoretical}, 53\penalty0 (36):\penalty0 363001, 2020.

\bibitem[Albarelli et~al.(2020)Albarelli, Barbieri, Genoni, and Gianani]{albarelli2020perspective}
Francesco Albarelli, Marco Barbieri, Marco~G Genoni, and Ilaria Gianani.
\newblock A perspective on multiparameter quantum metrology: From theoretical tools to applications in quantum imaging.
\newblock \emph{Physics Letters A}, 384\penalty0 (12):\penalty0 126311, 2020.

\bibitem[Pezz{\`e} and Smerzi(2025)]{pezze2025advances}
Luca Pezz{\`e} and Augusto Smerzi.
\newblock Advances in multiparameter quantum sensing and metrology.
\newblock \emph{arXiv preprint arXiv:2502.17396}, 2025.

\bibitem[Conlon et~al.(2021)Conlon, Suzuki, Lam, and Assad]{conlon2021efficient}
Lorc{\'a}n~O Conlon, Jun Suzuki, Ping~Koy Lam, and Syed~M Assad.
\newblock Efficient computation of the nagaoka--hayashi bound for multiparameter estimation with separable measurements.
\newblock \emph{npj Quantum Information}, 7\penalty0 (1):\penalty0 110, 2021.

\bibitem[Conlon et~al.(2022)Conlon, Suzuki, Lam, and Assad]{conlon2022gap}
Lorc{\'a}n~O Conlon, Jun Suzuki, Ping~Koy Lam, and Syed~M Assad.
\newblock The gap persistence theorem for quantum multiparameter estimation.
\newblock \emph{arXiv preprint arXiv:2208.07386}, 2022.

\bibitem[Sidhu et~al.(2021)Sidhu, Ouyang, Campbell, and Kok]{sidhu2021tight}
Jasminder~S Sidhu, Yingkai Ouyang, Earl~T Campbell, and Pieter Kok.
\newblock Tight bounds on the simultaneous estimation of incompatible parameters.
\newblock \emph{Physical Review X}, 11\penalty0 (1):\penalty0 011028, 2021.

\bibitem[Hayashi and Ouyang(2023)]{hayashi2023tight}
Masahito Hayashi and Yingkai Ouyang.
\newblock Tight cram{\'e}r-rao type bounds for multiparameter quantum metrology through conic programming.
\newblock \emph{Quantum}, 7:\penalty0 1094, 2023.

\bibitem[Berry(1984)]{berry1984quantal}
Michael~Victor Berry.
\newblock Quantal phase factors accompanying adiabatic changes.
\newblock \emph{Proceedings of the Royal Society of London. A. Mathematical and Physical Sciences}, 392\penalty0 (1802):\penalty0 45--57, 1984.

\bibitem[Berry(1989)]{berry1989quantum}
Michael~V Berry.
\newblock The quantum phase, five years after.
\newblock \emph{Geometric Phases in Physics}, 5:\penalty0 7--28, 1989.

\bibitem[Provost and Vallee(1980)]{provost1980riemannian}
JP~Provost and G~Vallee.
\newblock Riemannian structure on manifolds of quantum states.
\newblock \emph{Communications in Mathematical Physics}, 76:\penalty0 289--301, 1980.

\bibitem[Shapere and Wilczek(1989)]{shapere1989geometric}
Alfred Shapere and Frank Wilczek.
\newblock \emph{Geometric Phases in Physics}, volume~5.
\newblock World Scientific, 1989.

\bibitem[Carollo et~al.(2020)Carollo, Valenti, and Spagnolo]{carollo2020geometry}
Angelo Carollo, Davide Valenti, and Bernardo Spagnolo.
\newblock Geometry of quantum phase transitions.
\newblock \emph{Physics Reports}, 838:\penalty0 1--72, 2020.

\bibitem[Kobayashi and Nomizu(1996)]{kobayashi1996foundations}
Shoshichi Kobayashi and Katsumi Nomizu.
\newblock \emph{Foundations of Differential Geometry, volume 2}, volume~61.
\newblock John Wiley \& Sons, 1996.

\bibitem[Zanardi et~al.(2007)Zanardi, Giorda, and Cozzini]{zanardi2007information}
Paolo Zanardi, Paolo Giorda, and Marco Cozzini.
\newblock Information-theoretic differential geometry of quantum phase transitions.
\newblock \emph{Physical Review Letters}, 99\penalty0 (10):\penalty0 100603, 2007.

\bibitem[Campos~Venuti and Zanardi(2007)]{campos2007quantum}
Lorenzo Campos~Venuti and Paolo Zanardi.
\newblock Quantum critical scaling of the geometric tensors.
\newblock \emph{Physical Review Letters}, 99\penalty0 (9):\penalty0 095701, 2007.

\bibitem[Gu(2010)]{gu2010fidelity}
Shi-Jian Gu.
\newblock Fidelity approach to quantum phase transitions.
\newblock \emph{International Journal of Modern Physics B}, 24\penalty0 (23):\penalty0 4371--4458, 2010.

\bibitem[Kolodrubetz et~al.(2017)Kolodrubetz, Sels, Mehta, and Polkovnikov]{kolodrubetz2017geometry}
Michael Kolodrubetz, Dries Sels, Pankaj Mehta, and Anatoli Polkovnikov.
\newblock Geometry and non-adiabatic response in quantum and classical systems.
\newblock \emph{Physics Reports}, 697:\penalty0 1--87, 2017.

\bibitem[Yu et~al.(2024)Yu, Bernevig, Queiroz, Rossi, T{\"o}rm{\"a}, and Yang]{yu2024quantum}
Jiabin Yu, B~Andrei Bernevig, Raquel Queiroz, Enrico Rossi, P{\"a}ivi T{\"o}rm{\"a}, and Bohm-Jung Yang.
\newblock Quantum geometry in quantum materials.
\newblock \emph{npj Quantum Materials}, 10\penalty0 (1):\penalty0 101, 2025.

\bibitem[Jiang et~al.(2025)Jiang, Holder, and Yan]{jiang2025revealing}
Yiyang Jiang, Tobias Holder, and Binghai Yan.
\newblock Revealing quantum geometry in nonlinear quantum materials.
\newblock \emph{Reports on Progress in Physics}, 2025.

\bibitem[Tan et~al.(2019)Tan, Zhang, Yang, Chu, Zhu, Li, Yang, Song, Han, Li, et~al.]{tan2019experimental}
Xinsheng Tan, Dan-Wei Zhang, Zhen Yang, Ji~Chu, Yan-Qing Zhu, Danyu Li, Xiaopei Yang, Shuqing Song, Zhikun Han, Zhiyuan Li, et~al.
\newblock Experimental measurement of the quantum metric tensor and related topological phase transition with a superconducting qubit.
\newblock \emph{Physical Review Letters}, 122\penalty0 (21):\penalty0 210401, 2019.

\bibitem[Gianfrate et~al.(2020)Gianfrate, Bleu, Dominici, Ardizzone, De~Giorgi, Ballarini, Lerario, West, Pfeiffer, Solnyshkov, et~al.]{gianfrate2020measurement}
A~Gianfrate, O~Bleu, L~Dominici, V~Ardizzone, M~De~Giorgi, D~Ballarini, G~Lerario, KW~West, LN~Pfeiffer, DD~Solnyshkov, et~al.
\newblock Measurement of the quantum geometric tensor and of the anomalous hall drift.
\newblock \emph{Nature}, 578\penalty0 (7795):\penalty0 381--385, 2020.

\bibitem[Yu et~al.(2020)Yu, Yang, Gong, Cao, Lu, Liu, Zhang, Plenio, Jelezko, Ozawa, et~al.]{yu2020experimental}
Min Yu, Pengcheng Yang, Musang Gong, Qingyun Cao, Qiuyu Lu, Haibin Liu, Shaoliang Zhang, Martin~B Plenio, Fedor Jelezko, Tomoki Ozawa, et~al.
\newblock Experimental measurement of the quantum geometric tensor using coupled qubits in diamond.
\newblock \emph{National Science Review}, 7\penalty0 (2):\penalty0 254--260, 2020.

\bibitem[Kang et~al.(2024)Kang, Kim, Qian, Neves, Ye, Jung, Puntel, Mazzola, Fang, Jozwiak, et~al.]{kang2024measurements}
Mingu Kang, Sunje Kim, Yuting Qian, Paul~M Neves, Linda Ye, Junseo Jung, Denny Puntel, Federico Mazzola, Shiang Fang, Chris Jozwiak, et~al.
\newblock Measurements of the quantum geometric tensor in solids.
\newblock \emph{Nature Physics}, 21:\penalty0 110, 2024.

\bibitem[Yamaguchi et~al.(2024)Yamaguchi, Mitsuhashi, and Tajima]{yamaguchi2024quantum}
Koji Yamaguchi, Yosuke Mitsuhashi, and Hiroyasu Tajima.
\newblock Quantum geometric tensor determines the iid conversion rate in the resource theory of asymmetry for any compact lie group.
\newblock \emph{arXiv preprint arXiv:2411.04766}, 2024.

\bibitem[not()]{notation_Re_Im}
\emph{${\rm For \ a \ matrix}$ $X$, ${\rm we \ denote}$ ${\rm Re}[X] \!=\! (X \!+\! X^\top)/2$ ${\rm and}$ ${\rm Im}[X] \!=\! (X \!-\! X^\top)/(2i)$ ${\rm for}$ $X^\top$ ${\rm being \ its \ transposition}$.}

\bibitem[Uhlmann(1986)]{uhlmann1986parallel}
Armin Uhlmann.
\newblock Parallel transport and “quantum holonomy” along density operators.
\newblock \emph{Reports on Mathematical Physics}, 24\penalty0 (2):\penalty0 229--240, 1986.

\bibitem[Carollo et~al.(2018)Carollo, Spagnolo, and Valenti]{carollo2018uhlmann}
Angelo Carollo, Bernardo Spagnolo, and Davide Valenti.
\newblock Uhlmann curvature in dissipative phase transitions.
\newblock \emph{Scientific reports}, 8\penalty0 (1):\penalty0 9852, 2018.

\bibitem[Pezz{\'e} and Smerzi(2009)]{pezze2009entanglement}
Luca Pezzè and Augusto Smerzi.
\newblock Entanglement, nonlinear dynamics, and the heisenberg limit.
\newblock \emph{Physical Review Letters}, 102\penalty0 (10):\penalty0 100401, 2009.

\bibitem[Hyllus et~al.(2012)Hyllus, Laskowski, Krischek, Schwemmer, Wieczorek, Weinfurter, Pezz{\'e}, and Smerzi]{hyllus2012fisher}
Philipp Hyllus, Wies{\l}aw Laskowski, Roland Krischek, Christian Schwemmer, Witlef Wieczorek, Harald Weinfurter, Luca Pezzè, and Augusto Smerzi.
\newblock Fisher information and multiparticle entanglement.
\newblock \emph{Physical Review A—Atomic, Molecular, and Optical Physics}, 85\penalty0 (2):\penalty0 022321, 2012.

\bibitem[T{\'o}th(2012)]{toth2012multipartite}
G{\'e}za T{\'o}th.
\newblock Multipartite entanglement and high-precision metrology.
\newblock \emph{Physical Review A—Atomic, Molecular, and Optical Physics}, 85\penalty0 (2):\penalty0 022322, 2012.

\bibitem[Pezz{\`e} et~al.(2016)Pezz{\`e}, Li, Li, and Smerzi]{pezze2016witnessing}
Luca Pezz{\`e}, Yan Li, Weidong Li, and Augusto Smerzi.
\newblock Witnessing entanglement without entanglement witness operators.
\newblock \emph{Proceedings of the National Academy of Sciences}, 113\penalty0 (41):\penalty0 11459--11464, 2016.

\bibitem[Imai et~al.(2025)Imai, Smerzi, and Pezz{\`e}]{imai2025metrological}
Satoya Imai, Augusto Smerzi, and Luca Pezz{\`e}.
\newblock Metrological usefulness of entanglement and nonlinear hamiltonians.
\newblock \emph{Physical Review A}, 111\penalty0 (2):\penalty0 L020402, 2025.

\bibitem[Gessner et~al.(2018)Gessner, Pezz{\`e}, and Smerzi]{gessner2018sensitivity}
Manuel Gessner, Luca Pezz{\`e}, and Augusto Smerzi.
\newblock Sensitivity bounds for multiparameter quantum metrology.
\newblock \emph{Physical Review Letters}, 121\penalty0 (13):\penalty0 130503, 2018.

\bibitem[Du et~al.(2025)Du, Liu, Fadel, Vitagliano, and He]{du2025quantifying}
Shaowei Du, Shuheng Liu, Matteo Fadel, Giuseppe Vitagliano, and Qiongyi He.
\newblock Uncertainty Relations between Quantum Fisher Information and Entanglement Monotones.
\newblock \emph{Physical Review Letters}, 136\penalty0 (11):\penalty0 110806, 2026.

\bibitem[Yang~et al(2026)Yang, Imai,and Pezz{\`e}]{yang2026}
Jing Yang, Satoya Imai, and Luca Pezz{\`e}
\newblock A geometric criterion for optimal measurements in multiparameter quantum metrology.
\newblock \emph{arXiv preprint arXiv:arXiv:2601.21801}, 2026.

\bibitem[Belavkin and Grishanin(1973)]{belavkin1973optimum}
Vyacheslav~P Belavkin and Boris~Andreevich Grishanin.
\newblock Optimum estimation in quantum channels by the generalized heisenberg inequality method.
\newblock \emph{Problemy Peredachi Informatsii}, 9\penalty0 (3):\penalty0 44--52, 1973.

\bibitem[Tsang et~al.(2020)Tsang, Albarelli, and Datta]{tsang2020quantum}
Mankei Tsang, Francesco Albarelli, and Animesh Datta.
\newblock Quantum semiparametric estimation.
\newblock \emph{Physical Review X}, 10\penalty0 (3):\penalty0 031023, 2020.

\bibitem[Gill and Massar(2000)]{gill2000state}
Richard~D Gill and Serge Massar.
\newblock State estimation for large ensembles.
\newblock \emph{Physical Review A}, 61\penalty0 (4):\penalty0 042312, 2000.

\bibitem[Wang et~al.(2025)Wang, Chen, and Yuan]{wang2025tight}
Lingna Wang, Hongzhen Chen, and Haidong Yuan.
\newblock Tight tradeoff relation and optimal measurement for multi-parameter quantum estimation.
\newblock \emph{arXiv preprint arXiv:2504.09490}, 2025.

\bibitem[Razavian et~al.(2020)Razavian, Paris, and Genoni]{razavian2020quantumness}
Sholeh Razavian, Matteo~GA Paris, and Marco~G Genoni.
\newblock On the quantumness of multiparameter estimation problems for qubit systems.
\newblock \emph{Entropy}, 22\penalty0 (11):\penalty0 1197, 2020.

\bibitem[Heinosaari et~al.(2016)Heinosaari, Miyadera, and Ziman]{heinosaari2016invitation}
Teiko Heinosaari, Takayuki Miyadera, and M{\'a}rio Ziman.
\newblock An invitation to quantum incompatibility.
\newblock \emph{Journal of Physics A: Mathematical and Theoretical}, 49\penalty0 (12):\penalty0 123001, 2016.

\bibitem[G{\"u}hne et~al.(2023)G{\"u}hne, Haapasalo, Kraft, Pellonp{\"a}{\"a}, and Uola]{guhne2023colloquium}
Otfried G{\"u}hne, Erkka Haapasalo, Tristan Kraft, Juha-Pekka Pellonp{\"a}{\"a}, and Roope Uola.
\newblock Colloquium: Incompatible measurements in quantum information science.
\newblock \emph{Reviews of Modern Physics}, 95\penalty0 (1):\penalty0 011003, 2023.

\bibitem[Simon(1983)]{simon1983holonomy}
Barry Simon.
\newblock Holonomy, the quantum adiabatic theorem, and berry's phase.
\newblock \emph{Physical Review Letters}, 51\penalty0 (24):\penalty0 2167, 1983.

\bibitem[Aharonov and Anandan(1987)]{aharonov1987phase}
Yakir Aharonov and J~Anandan.
\newblock Phase change during a cyclic quantum evolution.
\newblock \emph{Physical Review Letters}, 58\penalty0 (16):\penalty0 1593, 1987.

\bibitem[Sakurai and Napolitano(2020)]{sakurai2020modern}
Jun~John Sakurai and Jim Napolitano.
\newblock \emph{Modern quantum mechanics}.
\newblock Cambridge University Press, 2020.

\bibitem[Nakahara(2018)]{nakahara2018geometry}
Mikio Nakahara.
\newblock \emph{Geometry, Topology and Physics}.
\newblock CRC press, 2018.

\bibitem[Xiao et~al.(2010)Xiao, Chang, and Niu]{xiao2010berry}
Di~Xiao, Ming-Che Chang, and Qian Niu.
\newblock Berry phase effects on electronic properties.
\newblock \emph{Reviews of Modern Physics}, 82\penalty0 (3):\penalty0 1959--2007, 2010.

\bibitem[Qi and Zhang(2011)]{qi2011topological}
Xiao-Liang Qi and Shou-Cheng Zhang.
\newblock Topological insulators and superconductors.
\newblock \emph{Reviews of Modern Physics}, 83\penalty0 (4):\penalty0 1057--1110, 2011.

\bibitem[Bernevig(2013)]{bernevig2013topological}
B~Andrei Bernevig.
\newblock Topological insulators and topological superconductors.
\newblock In \emph{Topological Insulators and Topological Superconductors}. Princeton University Press, 2013.

\bibitem[Horodecki et~al.(2022)Horodecki, Rudnicki, and {\.Z}yczkowski]{horodecki2022five}
Pawe{\l} Horodecki, {\L}ukasz Rudnicki, and Karol {\.Z}yczkowski.
\newblock Five open problems in quantum information theory.
\newblock \emph{PRX Quantum}, 3\penalty0 (1):\penalty0 010101, 2022.

\bibitem[Marvian(2022)]{marvian2022operational}
Iman Marvian.
\newblock Operational interpretation of quantum fisher information in quantum thermodynamics.
\newblock \emph{Physical Review Letters}, 129\penalty0 (19):\penalty0 190502, 2022.

\bibitem[Tan et~al.(2021)Tan, Narasimhachar, and Regula]{tan2021fisher}
Kok~Chuan Tan, Varun Narasimhachar, and Bartosz Regula.
\newblock Fisher information universally identifies quantum resources.
\newblock \emph{Physical Review Letters}, 127\penalty0 (20):\penalty0 200402, 2021.

\bibitem[Petz(2002)]{petz2002covariance}
D{\'e}nes Petz.
\newblock Covariance and fisher information in quantummechanics.
\newblock \emph{Journal of Physics A: Mathematical and General}, 35\penalty0 (4):\penalty0 929, 2002.

\bibitem[Petz and Ghinea(2011)]{petz2011introduction}
D{\'e}nes Petz and Catalin Ghinea.
\newblock Introduction to quantum fisher information.
\newblock In \emph{Quantum Probability and Related Topics}, pages 261--281. World Scientific, 2011.

\bibitem[Hauke et~al.(2016)Hauke, Heyl, Tagliacozzo, and Zoller]{hauke2016measuring}
Philipp Hauke, Markus Heyl, Luca Tagliacozzo, and Peter Zoller.
\newblock Measuring multipartite entanglement through dynamic susceptibilities.
\newblock \emph{Nature Physics}, 12\penalty0 (8):\penalty0 778--782, 2016.

\bibitem[Jiang(2014)]{jiang2014quantum}
Zhang Jiang.
\newblock Quantum fisher information for states in exponential form.
\newblock \emph{Physical Review A}, 89\penalty0 (3):\penalty0 032128, 2014.

\bibitem[Nichols et~al.(2018)Nichols, Liuzzo-Scorpo, Knott, and Adesso]{nichols2018multiparameter}
Rosanna Nichols, Pietro Liuzzo-Scorpo, Paul~A Knott, and Gerardo Adesso.
\newblock Multiparameter gaussian quantum metrology.
\newblock \emph{Physical Review A}, 98\penalty0 (1):\penalty0 012114, 2018.

\bibitem[Oh et~al.(2019)Oh, Lee, Banchi, Lee, Rockstuhl, and Jeong]{oh2019optimal}
Changhun Oh, Changhyoup Lee, Leonardo Banchi, Su-Yong Lee, Carsten Rockstuhl, and Hyunseok Jeong.
\newblock Optimal measurements for quantum fidelity between gaussian states and its relevance to quantum metrology.
\newblock \emph{Physical Review A}, 100\penalty0 (1):\penalty0 012323, 2019.

\bibitem[Het{\'e}nyi and L{\'e}vay(2023)]{hetenyi2023fluctuations}
Bal{\'a}zs Het{\'e}nyi and P{\'e}ter L{\'e}vay.
\newblock Fluctuations, uncertainty relations, and the geometry of quantum state manifolds.
\newblock \emph{Physical Review A}, 108\penalty0 (3):\penalty0 032218, 2023.

\end{thebibliography}


\vspace{1em}
{\footnotesize
\hypertarget{email1}{}\noindent\textsuperscript{*} \href{mailto:satoyaimai@yahoo.co.jp}{satoyaimai@yahoo.co.jp} \\
\hypertarget{email2}{}\textsuperscript{\textdagger} \href{mailto:jing.yang.quantum@zju.edu.cn}{jing.yang.quantum@zju.edu.cn} \\
\hypertarget{email3}{}\textsuperscript{\textdaggerdbl} \href{mailto:luca.pezze@ino.cnr.it}{luca.pezze@ino.cnr.it}
}
\clearpage
\newpage
\section*{END MATTER}
\paragraph*{Appendix~A: Extension of Eq.~(\ref{eq:def_matrixC}) to the general POVM case.---}
Let us write $\mathcal{C}(\vr_{\vec{\theta}},\mathsf{E}) = \sum_{\omega} \mathcal{C}_\omega(\vr_{\vec{\theta}})$, where $\mathcal{C}_\omega(\vr_{\vec{\theta}})$ has elements 
\begin{equation} \label{eq:SCGR_omega_level}
    [\mathcal{C}_\omega(\vr_{\vec{\theta}})]_{ij} \!=\! 
        \begin{dcases}
        \frac{ [\chi_{\omega,i}(\vec{\theta})]^* \chi_{\omega,j}(\vec{\theta}) }{ p_\omega (\vec{\theta}) }
        \!
        &{\rm for} \,\, {\rm Regular}\\ 
        \lim_{\vec{\Tilde{\theta}} \to \vec{\theta}}
        \frac{[{\chi}_{\omega,i}(\vec{\Tilde{\theta}})]^* {\chi}_{\omega,j}(\vec{\Tilde{\theta}})}{p_\omega(\vec{\Tilde{\theta}})}
        \!
        &{\rm for} \,\, {\rm Null}\\
        \end{dcases}
\end{equation}
Here, Regular means the case of regular POVM operators such that $p_\omega (\vec{\theta}) > 0$, while Null means the case of null POVM operators such that $p_\omega (\vec{\theta}) = 0$, where ${\chi}_{\omega,i}(\vec{\Tilde{\theta}}) \equiv \tr [\vr_{\vec{\Tilde{\theta}}} E_\omega L_i (\vec{\Tilde{\theta}})]$. In comparison, using $\sum_\omega E_\omega = \eins$, we can write $\mathcal{Q}(\vr_{\vec{\theta}}) = \sum_{\omega} \mathcal{Q}_\omega(\vr_{\vec{\theta}})$, where $[\mathcal{Q}_\omega(\vr_{\vec{\theta}})]_{ij} = \tr (\vr_{\vec{\theta}} L_i E_\omega L_j )$.

\vspace{1em}
\paragraph*{Appendix~B: Derivation of Eq.~(\ref{eq:classical_GT}).---}
Since the SLD is given by $L_i = 2 ( \ket{\partial_i \psi_{\vec{\theta}}}\! \bra{\psi_{\vec{\theta}}} + \ket{\psi_{\vec{\theta}}}\! \bra{\partial_i \psi_{\vec{\theta}}})$, we write $\chi_{\omega,i}(\vec{\theta}) = 2 \braket{\psi_{\vec{\theta}}| E_\omega| \partial_i \psi_{\vec{\theta}}} + 2 p_\omega (\vec{\theta}) \braket{\partial_i \psi_{\vec{\theta}}|\psi_{\vec{\theta}}}$. Then, $[\chi_{\omega,i}(\vec{\theta})]^* \chi_{\omega,j}(\vec{\theta}) = 4 \braket{\partial_i \psi_{\vec{\theta}}|\mathcal{J}_\omega|\partial_j \psi_{\vec{\theta}}}$, where
\begin{align} \nonumber
    \mathcal{J}_\omega
    &\equiv E_\omega \ket{\psi_{\vec{\theta}}}\! \bra{\psi_{\vec{\theta}}} E_\omega 
    + [p_\omega (\vec{\theta})]^2 \ket{\psi_{\vec{\theta}}}\! \bra{\psi_{\vec{\theta}}}
    \\
    &- p_\omega (\vec{\theta}) (E_\omega \ket{\psi_{\vec{\theta}}}\! \bra{\psi_{\vec{\theta}}} + \ket{\psi_{\vec{\theta}}}\! \bra{\psi_{\vec{\theta}}} E_\omega).
\end{align}
Here we use $\braket{\partial_i \psi_{\vec{\theta}}|\psi_{\vec{\theta}}} \!=\! - \braket{\psi_{\vec{\theta}}|\partial_i \psi_{\vec{\theta}}}$ and $\braket{\psi_{\vec{\theta}}|\psi_{\vec{\theta}}}\!=\!1$. Inserting this into Eq.~(\ref{eq:def_matrixC}) and using $\sum_\omega E_\omega \!=\! \eins$, we obatin Eq.~(\ref{eq:classical_GT}). Similarly, the null-POVM case can be shown.

\vspace{1em}
\paragraph*{Appendix~C: The gauge invariance of the SCGT.---}
For $\ket{\psi_{\vec{\theta}}^\prime} = e^{i \alpha_{\vec{\theta}}} \ket{\psi_{\vec{\theta}}}$, we have $\ket{\partial_i \psi_{\vec{\theta}}^\prime} = (i \partial_i \alpha_{\vec{\theta}}) \ket{\psi_{\vec{\theta}}^\prime} + e^{i \alpha_{\vec{\theta}}} \ket{\partial_i \psi_{\vec{\theta}}}$ and $\mathcal{M}(\ket{\psi_{\vec{\theta}}},\mathsf{E}) = \mathcal{M}(\ket{\psi_{\vec{\theta}}^\prime},\mathsf{E})$. This yields $\braket{\partial_i \psi_{\vec{\theta}}^\prime|\mathcal{M}(\ket{\psi_{\vec{\theta}}^\prime},\mathsf{E})|\partial_j \psi_{\vec{\theta}}^\prime} = \braket{\partial_i \psi_{\vec{\theta}}|\mathcal{M}(\ket{\psi_{\vec{\theta}}},\mathsf{E})|\partial_j \psi_{\vec{\theta}}} + d_{ij}$, where $d_{ij} \equiv (\partial_i \alpha_{\vec{\theta}})(\partial_j \alpha_{\vec{\theta}}) - (i\partial_i \alpha_{\vec{\theta}}) \braket{\psi_{\vec{\theta}}|\partial_j \psi_{\vec{\theta}}} + (i \partial_j \alpha_{\vec{\theta}}) \braket{\partial_i \psi_{\vec{\theta}}|\psi_{\vec{\theta}}}$. Here we used $\braket{\psi_{\vec{\theta}}|\mathcal{M}(\ket{\psi_{\vec{\theta}}},\mathsf{E})|\partial_j \psi_{\vec{\theta}}} = \braket{\psi_{\vec{\theta}}|\partial_j \psi_{\vec{\theta}}}$ and $\braket{\psi_{\vec{\theta}}|\mathcal{M}(\ket{\psi_{\vec{\theta}}},\mathsf{E})|\psi_{\vec{\theta}}} = 1$. Also we have $\braket{\partial_i \psi_{\vec{\theta}}^\prime|\partial_j \psi_{\vec{\theta}}^\prime} = \braket{\partial_i \psi_{\vec{\theta}}|\partial_j \psi_{\vec{\theta}}} + d_{ij}$. Inserting these into Eq.~(\ref{eq:classical_GT}), we find that $\mathcal{C}(\ket{\psi_{\vec{\theta}}^\prime},\mathsf{E}) = \mathcal{C}(\ket{\psi_{\vec{\theta}}},\mathsf{E})$. Similarly, the null-POVM case can be shown.

\vspace{1em}
\paragraph*{Appendix~D: Extension of Eq.~(\ref{eq:ourCQinequality}) to each measurement outcome.---}
Here we show that
\begin{equation} \label{eq:inequality_each_omega}
    \mathcal{C}_\omega(\vr_{\vec{\theta}}) \leq \mathcal{Q}_\omega(\vr_{\vec{\theta}}),
    \quad
    \forall \omega,
\end{equation}
holds, where both terms were considered in Appendix~A. Notice that Eq.~(\ref{eq:ourCQinequality}) is recovered when summing over POVM operators in Eq.~(\ref{eq:inequality_each_omega}). According to Eq.~(\ref{eq:inequality_each_omega}), the saturation condition for the inequality~(\ref{eq:ourCQinequality}) is reduced to that for every outcome, meaning that $\mathcal{C}(\vr_{\vec{\theta}},\mathsf{E}) = \mathcal{Q}(\vr_{\vec{\theta}})$ if and only if $\mathcal{C}_\omega(\vr_{\vec{\theta}}) = \mathcal{Q}_\omega(\vr_{\vec{\theta}})$ for all $\omega$.

For the regular-POVM case, Eq.~(\ref{eq:inequality_each_omega}) is obtained by using the same Cauchy-Schwarz inequality as in the proof of Observation~\ref{ob:ourinequality}, i.e., $|\tr(X^\dagger Y)|^2 \leq \tr(XX^\dagger) \tr(YY^\dagger)$ with $X = \sqrt{E_\omega} \sqrt{\vr_{\vec{\theta}}}$ and $Y = \sqrt{E_\omega} \Tilde{L} \sqrt{\vr_{\vec{\theta}}}$ and $\Tilde{L}=\sum_i z_i L_i$. The necessary and sufficient condition for $\mathcal{C}_\omega(\vr_{\vec{\theta}}) = \mathcal{Q}_\omega(\vr_{\vec{\theta}})$ is the saturation of the Cauchy-Schwarz inequality for all possible choice of $\vec{z}$, i.e., the existence of complex coefficients $\mu_{\omega, i}$ such that
\begin{equation} \label{saturation1}
    E_\omega \vr_{\vec{\theta}} = \mu_{\omega, i} E_\omega L_i \vr_{\vec{\theta}},
    \quad
    \forall i.
\end{equation} 

For the null-POVM case, one first observes that all the eigenvectors of $\vr_{\vec{\theta}}$ lie in the kernel of $E_{\omega}$, i.e., $E_{\omega}\vr_{\vec{\theta}} \!=\!\vr_{\vec{\theta}}E_{\omega}\!=\!0$. Using the observation and the definition of the SLD, a similar manipulation to Ref.~\cite{yang2019optimal} shows that $\partial_{i}p_{\omega}(\vec{\theta})\!=\!{\rm Re}[\tr(L_{i}\vr_{\vec{\theta}}E_{\omega})]\!=\!0$, $\partial_{i}\partial_{j}p_{\omega}(\vec{\theta})\!=\!\{[\mathcal{Q}_{\omega}(\vr_{\vec{\theta}})]_{ij}\!+\![\mathcal{Q}_{\omega}(\vr_{\vec{\theta}})]_{ji}\}/4$, and $\partial_{i}\chi_{\omega,j}({\vec{\theta}})\!=\!(1/2)[\mathcal{Q}_{\omega}(\vr_{\vec{\theta}})]_{ij}$. Inserting these for the Taylor expansions of $p_\omega(\vec{\Tilde{\theta}})$ and ${\chi}_{\omega,j}(\vec{\Tilde{\theta}})$ in Eq.~(\ref{eq:SCGR_omega_level}) yields $p_\omega(\vec{\Tilde{\theta}}) = \delta\vec{\theta}^{T}\mathcal{Q}_{\omega}(\vr_{\vec{\theta}})\delta\vec{\theta}$ and ${\chi}_{\omega,j}(\vec{\Tilde{\theta}}) = \sum_{ij} [\mathcal{Q}_{\omega}(\vr_{\vec{\theta}})]_{ij} \delta\theta_i$, where $\delta\vec{\theta}=\tilde{\vec{\theta}}-\vec{\theta}$. Then,
\begin{equation} \label{eq:appendix_null}
\vec{z}^{\dagger}\mathcal{C}_{\omega}(\vr_{\vec{\theta}})\vec{z}=\frac{\lvert \vec{z}^{\dagger}\mathcal{Q}_{\omega}(\vr_{\vec{\theta}})\delta\vec{\theta} \rvert^{2}}{\delta\vec{\theta}^{T}\mathcal{Q}_{\omega}(\vr_{\vec{\theta}})\delta\vec{\theta}}.
\end{equation}
We apply the Cauchy-Schwarz inequality $|\tr(X^\dagger Y)|^2 \leq \tr(XX^\dagger) \tr(YY^\dagger)$ for the numerator in Eq.~(\ref{eq:appendix_null}). Taking $X = \sqrt{E_{\omega}} \Hat{L} \sqrt{\vr_{\vec{\theta}}}$ and $Y = \sqrt{E_\omega} \Tilde{L} \sqrt{\vr_{\vec{\theta}}}$ with $\Hat{L}=\sum_i \delta \theta_i L_i$ and $\Tilde{L}=\sum_i z_i L_i$, we obtain
\begin{equation}
    \lvert \vec{z}^{\dagger}\mathcal{Q}_{\omega}(\vr_{\vec{\theta}})\delta\vec{\theta} \lvert^{2}
    \leq
    [\vec{z}^{\dagger}\mathcal{Q}_{\omega}(\vr_{\vec{\theta}})\vec{z} ]
    \cdot
    [\delta\vec{\theta}^{T}\mathcal{Q}_{\omega}(\vr_{\vec{\theta}})\delta\vec{\theta} ].
\end{equation}
This directly yields $\mathcal{C}_{\omega}(\vr_{\vec{\theta}}) \! \leq \! \mathcal{Q}_{\omega}(\vr_{\vec{\theta}})$. The necessary and sufficient condition for ${\mathcal{C}}_\omega(\vr_{\vec{\theta}}) \!=\! \mathcal{Q}_\omega(\vr_{\vec{\theta}})$ is the saturation of the Cauchy-Schwarz inequality for all choices of $\vec{z}$, i.e., the existence of complex coefficients $\mu_{\omega, ij}$ such that
\begin{equation} \label{saturation2}
    E_\omega L_i \vr_{\vec{\theta}} = \mu_{\omega, ij} E_\omega L_j \vr_{\vec{\theta}},
    \quad
    \forall i,j.
\end{equation}

\vspace{1em}
\paragraph*{Appendix~E: Saturation of Eq.~(\ref{eq:ourCQinequality}) in the regular-POVM case.---}
Let $E_{\omega} = \sum_\alpha e_{\omega, \alpha} \ket{\pi_{\omega, \alpha}} \! \bra{\pi_{\omega, \alpha}}$ be the spectral decomposition of a general POVM element, with $e_{\omega, \alpha}\geq 0$ and the eigenstates $\ket{\pi_{\omega, \alpha}}$ being not necessarily orthogonal for different $\omega$. Considering the spectral decomposition $\vr_{\vec{\theta}}=\sum_{x} \lambda_{x,\vec{\theta}} \ket{\psi_{x, \vec{\theta}}} \! \bra{\psi_{x, \vec{\theta}}}$, with $\lambda_{x,\vec{\theta}} \geq 0$, we can thus rewrite Eq.~(\ref{saturation1}) as 
\begin{equation} \label{endmatterEq}
    \sum_{x, \alpha}
    \lambda_{x,\vec{\theta}} e_{\omega, \alpha}
    \braket{\pi_{\omega, \alpha}|\eins \!-\! \mu_{\omega, i} L_i|\psi_{x, \vec{\theta}}}
    \ket{\pi_{\omega, \alpha}} \! \bra{\psi_{x, \vec{\theta}}} \!=\! 0.
\end{equation}

Due to the linear independence among the set of operators $\{\ket{\pi_{\omega, \alpha}} \! \bra{\psi_{x, \vec{\theta}}}\}_{x,\alpha}$, Eq.~(\ref{endmatterEq}) can be fulfilled if and only if each term in the bracket is equal to zero. This can be seen more explicitly by projecting on the right and left side of Eq.~(\ref{endmatterEq}) over a complete basis. In other words, the condition Eq.~(\ref{endmatterEq}) is equivalent to asking the corresponding matrix to be null for any possible choice of basis, which is only possible if the bracket term vanishes.

Without loss of generality, we can restrict to rank-one POVM operator $E_{\omega}=\ket{\pi_{\omega}}\! \bra{\pi_{\omega}}$. The necessary and sufficient condition is therefore the existence of a coefficient $\mu_{\omega, i}$ such that $\braket{\pi_{\omega}|\psi_{x, \vec{\theta}}} = \mu_{\omega, i} \braket{\pi_{\omega}|L_i|\psi_{x, \vec{\theta}}}$, for all $i$ and $x$. This condition is equal to $\braket{\pi_\omega |L_i|\psi_{x, \vec{\theta}}} /\braket{\pi_\omega |\psi_{x, \vec{\theta}}} = \braket{\pi_\omega |L_i|\psi_{y, \vec{\theta}}} /\braket{\pi_\omega |\psi_{y, \vec{\theta}}}$ for all $i$ and $x, y$, which leads to Eq.~(\ref{eq:R_case}) using $\tr(A) \tr(B) = \tr(A \otimes B)$.

\vspace{1em}
\paragraph*{Appendix~F: Saturation of Eq.~(\ref{eq:ourCQinequality}) in the null-POVM case.---}
Following Appendix~E, we consider the spectral decomposition of $E_\omega$ and $\vr_{\vec{\theta}}$. It is then suffices to consider rank-one null POVM operators such that Eq.~(\ref{saturation2}) becomes equivalent to $\braket{\pi_\omega |L_i|\psi_{x, \vec{\theta}}} = \mu_{\omega, ij} \braket{\pi_\omega |L_j|\psi_{x, \vec{\theta}}}$ for all $i,j$ and $x$. This can be also rewritten as $\braket{\pi_\omega |L_i|\psi_{x, \vec{\theta}}}/\braket{\pi_\omega |L_j|\psi_{x, \vec{\theta}}} = \braket{\pi_\omega |L_i|\psi_{y, \vec{\theta}}}/\braket{\pi_\omega |L_j|\psi_{y, \vec{\theta}}}$ for all $i,j$ and $x, y$. It immediately leads to the necessary and sufficient condition
\begin{equation}
    \label{eq:N_case}
    \bra{\pi_\omega} \otimes \bra{\pi_\omega} (L_i \otimes L_j - L_j \otimes L_i)
    \ket{\psi_{x, \vec{\theta}}} \otimes \ket{\psi_{y, \vec{\theta}}}
    = 0,
\end{equation}
for all $i,j,\omega,x,y$.

\vspace{1em}
\paragraph*{Appendix~G: Expressions for pure states and unitary transformations.---}
Here we present the explicit expressions for the key quantities discussed in this manuscript, considering the simple case of a pure state $\ket{\psi_{\vec{\theta}}} = U_{\vec{\theta}} \ket{\psi}$, where $U_{\vec{\theta}}$ is a unitary parameter-encoding transformation. {By recalling that $\mathcal{Q} = \mathcal{F}_Q + i \mathcal{G}$ and $\mathcal{C} = \mathcal{F}_C + \mathcal{I} + i \mathcal{D}$, a direct calculation yields
\begin{subequations}
    \begin{align}
        [\mathcal{Q}]_{ij}
        &\!=\!
        4 \left[ \braket{\mathcal{H}_i \mathcal{H}_j}
        \!-\! \braket{\mathcal{H}_i} \! \braket{\mathcal{H}_j} \right],
        \\
        [\mathcal{F}_Q]_{ij}
        &\!=\! 2 \braket{\mathcal{H}_i \mathcal{H}_j \!+\! \mathcal{H}_j \mathcal{H}_i}\!-\! 4 \braket{\mathcal{H}_i} \! \braket{\mathcal{H}_j},
        \\
        [\mathcal{G}]_{ij}
        &\!=\! -2i \braket{\mathcal{H}_i \mathcal{H}_j \!-\! \mathcal{H}_j \mathcal{H}_i},
        \\
        [\mathcal{C}]_{ij}
        &\!=\!
        4 \left[ \braket{\mathcal{H}_i \mathcal{N} \mathcal{H}_j}
        \!-\! \braket{\mathcal{H}_i} \! \braket{\mathcal{H}_j} \right],
        \\
        \! \! \! \!
        [\mathcal{F}_C + \mathcal{I}]_{ij}
        &\!=\! 2 \braket{\mathcal{H}_i \mathcal{N} \mathcal{H}_j
        \!+\! \mathcal{H}_j \mathcal{N} \mathcal{H}_i} 
        \!-\! 4 \braket{\mathcal{H}_i} \! \braket{\mathcal{H}_j},
        \\
        [\mathcal{D}]_{ij}
        &\!=\! -2i 
        \braket{\mathcal{H}_i \mathcal{N} \mathcal{H}_j \!-\! \mathcal{H}_j \mathcal{N}\mathcal{H}_i},  
        \\
        [\mathcal{F}_C]_{ij}
        &\!=\! \braket{\mathcal{H}_i \mathcal{N} \mathcal{H}_j \!+\! \mathcal{H}_j \mathcal{N} \mathcal{H}_i} \!-\! \eta_{ij},
        \\
        [\mathcal{I}]_{ij}
        &\!=\! \braket{\mathcal{H}_i \mathcal{N} \mathcal{H}_j \!+\! \mathcal{H}_j \mathcal{N} \mathcal{H}_i}
        \!-\! 4 \braket{\mathcal{H}_i} \! \braket{\mathcal{H}_j}
        \!+\! \eta_{ij},      
    \end{align}
\end{subequations}
where the dependencies on $\vec{\theta}$, the state, and the POVM are omitted here and below. In the above expressions, we denoted that $\braket{X} \equiv \braket{\psi|X|\psi}$ for an operator $X$ and $\mathcal{H}_i \equiv - i (\partial_i U_{\vec{\theta}}^\dagger) U_{\vec{\theta}}$. Here, $\mathcal{N} \equiv \mathcal{N} (\ket{\psi_{\vec{\theta}}},\mathsf{E})$ and $\eta_{ij} \equiv \eta_{ij} (\ket{\psi_{\vec{\theta}}},\mathsf{E}) = \braket{\psi, \psi|\mathcal{O}_{ij}|\psi, \psi}$ with 
\begin{subequations}
\begin{align}
    \mathcal{N}
    &\!=\! U_{\vec{\theta}}^\dagger \mathcal{M}(\ket{\psi_{\vec{\theta}}},\mathsf{E}) U_{\vec{\theta}}
    = \sum_\omega \frac{1}{p_\omega (\vec{\theta})}
    \mathcal{E}_\omega \ket{\psi}\! \bra{\psi} \mathcal{E}_\omega,
    \\
    \mathcal{O}_{ij}
    &\!=\! \sum_\omega \frac{1}{p_\omega (\vec{\theta})}
    (\mathcal{E}_\omega \mathcal{H}_i \otimes \mathcal{E}_\omega \mathcal{H}_j + \mathcal{H}_i \mathcal{E}_\omega \otimes \mathcal{H}_j \mathcal{E}_\omega ),
\end{align}
\end{subequations}
where $\ket{\psi,\psi} = \ket{\psi \otimes \psi}$, $\mathcal{E}_\omega = U_{\vec{\theta}}^\dagger E_\omega U_{\vec{\theta}}$, and $\mathcal{M}(\ket{\psi_{\vec{\theta}}},\mathsf{E})$ is defined in Eq.~(\ref{eq:classical_GT}). For a rank-one POVM $\mathsf{E} = \{\ket{\pi_\omega}\! \bra{\pi_\omega}\}$, we have $\mathcal{M}(\ket{\psi_{\vec{\theta}}},\mathsf{E}) = \eins$ and thus $\mathcal{N} = \eins$.
}

\vspace{1em}
\paragraph*{Appendix~H: Necessary and sufficient condition for $\mathcal{I}(\vr_{\vec{\theta}},\mathsf{E})=0$.---}
Similarly to Appendix~A, let us write $\mathcal{I}(\vr_{\vec{\theta}},\mathsf{E}) = \sum_{\omega} \mathcal{I}_\omega(\vr_{\vec{\theta}})$, where $\mathcal{I}_\omega(\vr_{\vec{\theta}})$ has elements 
\begin{equation} \nonumber
    [\mathcal{I}_\omega(\vr_{\vec{\theta}})]_{ij} \!=\! 
        \begin{dcases}
        \frac{
        {\rm Im}[\chi_{\omega,i}(\vec{\theta})]
        {\rm Im}[\chi_{\omega,j}(\vec{\theta})]
        }{ p_\omega (\vec{\theta}) }
        \!
        &{\rm for} \,\, {\rm Regular}\\ 
        \lim_{\vec{\Tilde{\theta}} \to \vec{\theta}}
        \frac{
        {\rm Im}[\chi_{\omega,i}(\vec{\Tilde{\theta}})]
        {\rm Im}[\chi_{\omega,j}(\vec{\Tilde{\theta}})]
        }{p_\omega(\vec{\Tilde{\theta}})}
        \!
        &{\rm for} \,\, {\rm Null}\\
        \end{dcases}
\end{equation}
The necessary and sufficient condition for $\mathcal{I}(\vr_{\vec{\theta}},\mathsf{E})=0$ is given by $\mathcal{I}_\omega(\vr_{\vec{\theta}}) = 0$ for all $\omega$, since $\mathcal{I}_\omega(\vr_{\vec{\theta}})$ are positive-semidefinite matrices.

For a rank-one regular POVM $E_\omega = \ket{\pi_\omega}\! \bra{\pi_\omega}$, the condition ${\rm Im}[\chi_{\omega,i}(\vec{\theta})] = 0$ for all $i$ is equivalent to
\begin{equation} \label{eq:regular_POVM_Izero}
    \braket{\pi_\omega|L_i \vr_{\vec{\theta}} - \vr_{\vec{\theta}} L_i|\pi_\omega} = 0,
    \quad
    \forall i.
\end{equation}
For pure states, using $L_i = 2 ( \ket{\partial_i \psi_{\vec{\theta}}}\! \bra{\psi_{\vec{\theta}}} + \ket{\psi_{\vec{\theta}}}\! \bra{\partial_i \psi_{\vec{\theta}}})$, we can thus rewrite Eq.~(\ref{eq:regular_POVM_Izero}) as ${\rm Im}[\braket{\partial_i \psi_{\vec{\theta}}|\pi_\omega} \! \braket{\pi_\omega|\psi_{\vec{\theta}}}] = \lvert \braket{\psi_{\vec{\theta}}|\pi_\omega} \rvert^2 {\rm Im}[\braket{\partial_i \psi_{\vec{\theta}}|\psi_{\vec{\theta}}}]$, which is equivalent to the necessary and sufficient condition for $\mathcal{F}_C(\ket{\psi_{\vec{\theta}}},\mathsf{E})=\mathcal{F}_Q(\ket{\psi_{\vec{\theta}}})$ presented in Eq.~(8) of Ref.~\cite{pezze2017optimal}. For mixed states, Eq.~(\ref{eq:regular_POVM_Izero}) becomes equivalent to the existence of real coefficients $\mu_{\omega, i}$ such that $\braket{\pi_{\omega}|\psi_{x, \vec{\theta}}} = \mu_{\omega, i} \braket{\pi_{\omega}|L_i|\psi_{x, \vec{\theta}}}$. This recovers the necessary and sufficient condition for $\mathcal{F}_C(\vr_{\vec{\theta}},\mathsf{E})=\mathcal{F}_Q(\vr_{\vec{\theta}})$ presented in Eq.~(39) of Ref.~\cite{yang2019optimal}.

For a rank-one null POVM $E_\omega = \ket{\pi_\omega}\! \bra{\pi_\omega}$, the condition ${\rm Im}[{\chi}_{\omega,i}(\vec{\Tilde{\theta}})] = 0$ for all $i$ is equivalent to
\begin{equation} \label{eq:null_POVM_Izero}
    \braket{\pi_\omega|L_i \vr_{\vec{\theta}} L_j - L_j \vr_{\vec{\theta}} L_i
    |\pi_\omega}
    = 0,
    \quad
    \forall i,j,
\end{equation}
where we used that ${\chi}_{\omega,j}(\vec{\Tilde{\theta}}) = \sum_{ij} [\mathcal{Q}_{\omega}(\vr_{\vec{\theta}})]_{ij} \delta\theta_i$ given in Appendix~D with $[\mathcal{Q}_\omega(\vr_{\vec{\theta}})]_{ij} = \tr (\vr_{\vec{\theta}} L_i E_\omega L_j )$. For pure states, we can thus rewrite Eq.~(\ref{eq:null_POVM_Izero}) as ${\rm Im}[\braket{\partial_i \psi_{\vec{\theta}}|\pi_\omega} \! \braket{\pi_\omega|\partial_j \psi_{\vec{\theta}}}]=0$, which is equivalent to Eq.~(7) of Ref.~\cite{pezze2017optimal}. For mixed states, Eq.~(\ref{eq:null_POVM_Izero}) becomes equivalent to the existence of real coefficients $\mu_{\omega, ij}$ such that $\braket{\pi_\omega |L_i|\psi_{x, \vec{\theta}}} = \mu_{\omega, ij} \braket{\pi_\omega |L_j|\psi_{x, \vec{\theta}}}$. This recovers the previous condition presented in Eq.~(44) of Ref.~\cite{yang2019optimal}.

\vspace{1em}
\paragraph*{Appendix~I: Practical accessibility of the SCGT.---}
Here we discuss the practical accessibility of the SCGT for a pure state $\ket{\psi_{\vec{\theta}}} = U_{\vec{\theta}} \ket{\psi}$. First, according to Appendix~G, we have
\begin{subequations}
\begin{align}
    \braket{\mathcal{H}_i \mathcal{N} \mathcal{H}_j \!+\! \mathcal{H}_j \mathcal{N} \mathcal{H}_i}
    &= \braket{\psi, \psi| \mathcal{K}_{ij}^+ |\psi, \psi},
    \label{eq:two:K+}
    \\
    (-i) \braket{\mathcal{H}_i \mathcal{N} \mathcal{H}_j \!-\! \mathcal{H}_j \mathcal{N} \mathcal{H}_i}
    &= \braket{\psi, \psi| \mathcal{K}_{ij}^-|\psi, \psi},
    \label{eq:two:K-}
\end{align}    
\end{subequations}
where $\mathcal{K}_{ij}^{+} \!=\! \sum_{\omega} [1/p_\omega (\vec{\theta})] (\mathcal{H}_i \mathcal{E}_\omega \otimes \mathcal{E}_\omega \mathcal{H}_j + \mathcal{H}_j \mathcal{E}_\omega \otimes \mathcal{E}_\omega \mathcal{H}_i)$, and $\mathcal{K}_{ij}^{-} \!=\! \sum_{\omega} [1/p_\omega (\vec{\theta})]  (-i) (\mathcal{H}_i \mathcal{E}_\omega \otimes \mathcal{E}_\omega \mathcal{H}_j - \mathcal{H}_j \mathcal{E}_\omega \otimes \mathcal{E}_\omega \mathcal{H}_i)$. Note that $\mathcal{K}_{ij}^{\pm} = \mathcal{K}_{ij}^{\pm} (\ket{\psi_{\vec{\theta}}}, \mathsf{E})$ and its state dependency arises through $p_\omega (\vec{\theta})$.

Therefore, given two copies of any unknown state $\ket{\psi}$ and a known unitary $U_{\vec{\theta}}$, the SCGT associated with any POVM can be experimentally accessed by measuring the following quantities: $p_{\omega}(\vec{\theta})$, $\braket{\psi|\mathcal{H}_i|\psi}$, and $\braket{\psi, \psi| \mathcal{K}_{ij}^{\pm} |\psi, \psi}$. Now, $\mathcal{K}_{ij}^{\pm}$ are non-Hermitian operators, while their expectation values in the two-copy system can be measured via Hermitian operators as follows: For non-Hermitian operators $A$, $B$, and $\mathcal{X}_+ = A \otimes B + B^\dagger \otimes A^\dagger$ and $\mathcal{X}_- = (-i) (A \otimes B - B^\dagger \otimes A^\dagger)$, it holds that
\begin{subequations}
\begin{align}
    \! \! \! 
    \braket{\psi, \psi|\mathcal{X}_+|\psi, \psi}
    &\!=\! \braket{\psi, \psi|\frac{A_+ \! \otimes \! B_+ \!-\! A_- \! \otimes \! B_-}{2}|\psi, \psi},
    \label{eq:two-copy:+}
    \\
    \! \! \! 
    \braket{\psi, \psi|\mathcal{X}_-|\psi, \psi}
    &\!=\! \braket{\psi, \psi|\frac{A_+ \! \otimes \! B_- \!+\! A_- \! \otimes \! B_+}{2}|\psi, \psi},
    \label{eq:two-copy:-}
\end{align}    
\end{subequations}
where the Hermitian elements of $X = A,B$ are $X_+ = (1/2) (X + X^\dagger)$ and $X_- = (1/2i) (X - X^\dagger)$. Hence, taking $\mathcal{X}_{\pm} = \mathcal{K}_{ij}^{\pm}$ and provided that a two-copy state is available, one can in principle construct the Hermitian observables such as the right-hand side in Eqs.~(\ref{eq:two-copy:+},\ref{eq:two-copy:-}) to obtain Eqs.~(\ref{eq:two:K+},\ref{eq:two:K-}).

\end{document}